\newcommand{\sm}{\setminus}
\newcommand{\G}{\mathcal{G}}
\theoremstyle{plain}
\newtheorem{prop}[equation]{Proposition}
\newtheorem{thm}[equation]{Theorem}
\newtheorem{cor}[equation]{Corollary}
\newtheorem{lem}[equation]{Lemma}
\theoremstyle{definition}
\newtheorem{prob}[equation]{Problem}
\newtheorem{defn}[equation]{Definition}
\newtheorem{rem}[equation]{Remark}
\DeclareMathOperator*{\EFC}{EFC}
\DeclareMathOperator*{\ESE}{ESE}
\DeclareMathOperator*{\ECE}{ECE}
\DeclareMathOperator*{\Shed}{Shed}
\tikzset{nodde/.style={circle,draw=blue!50,fill=pink!80,inner sep=4.2pt}}
\tikzset{noddee/.style={circle,draw=black,fill=black,inner sep=2pt}}
\tikzset{noddee2/.style={circle,draw=black,fill=black,inner sep=3pt}}
\newtheorem{claim}{Claim}
\newenvironment{cproof}[1]{\par\indent{\textit{Proof of the Claim.}}\space#1}{\hfill $\diamondsuit$}
\newtheorem{case}{Case}
\begin{document}
%%-----------------------------
%%      the top matter
%%-----------------------------
\title{Edge-Stable Equimatchable Graphs}% At most 5 thanks
\author{Zakir Deniz}\thanks{Department of Mathematics, Duzce University, Duzce, Turkey}
\author{T{\i}naz Ekim}\thanks{Department of Industrial Engineering, Bo\u{g}azi\c{c}i University, Istanbul, Turkey }
\date{\today}

\begin{abstract} 
 A graph $G$ is \emph{equimatchable} if every maximal matching of $G$ has the same cardinality. We are interested in equimatchable graphs such that the removal of any edge from the graph preserves the equimatchability. We call an equimatchable graph $G$ \emph{edge-stable}  if $G\setminus {e}$, that is the graph obtained by the removal of edge $e$ from $G$, is also equimatchable for any $e \in E(G)$. After noticing that edge-stable equimatchable graphs are either 2-connected factor-critical or bipartite, we characterize edge-stable equimatchable graphs. This characterization yields an $O(\min(n^{3.376}, n^{1.5}m))$ time recognition algorithm. Lastly, we introduce and shortly discuss the related notions of edge-critical, vertex-stable and vertex-critical equimatchable graphs. In particular, we emphasize  the links between our work and the well-studied notion of shedding vertices, and point out some open questions.  
\end{abstract}
\subjclass{05C70}

\keywords{1-well-covered, maximal matching, edge-stability, edge-criticality, shedding vertex.}
\maketitle
%%-----------------------------
%%      your text
%%-----------------------------

\section*{Introduction}
Given a graph $G$, a \emph{matching} is a set of edges of $G$ having pairwise no common endvertex. A matching $M$ is called \emph{maximal} if no new edge can be added to $M$ while keeping its property of being a matching. The problem of finding an inclusion-wise maximal matching of minimum size in a graph has been a central problem for many researchers both for its practical and theoretical point of views. This minimum size is equal to the size of a minimum edge dominating set which is also widely studied \cite{yannakakis}. One application is the following: let $A$ be a $0-1$ matrix, and consider the problem of finding a minimum set $C$ of 1's in $A$ such that any other 1 of $A$ is in the same row or column with an element of $C$. Another application is about a telephone switching network built to route phone calls from incoming lines to outgoing trunks (assuming that a trunk can
pass only one phone call at a time). The problem is to find the worst-case behavior of the
network, i.e., the minimum number of routed calls when the network is saturated, thus
no new call can be routed. Both applications can be modeled as the problem of finding a minimum maximal matching in a bipartite graph. However, solving this problem is NP-hard even in bipartite graphs with maximum degree 3 \cite{yannakakis}, or in $k$-regular bipartite graphs for any fixed $k\geq 3$ \cite{DET}. We note that finding a minimum maximal matching becomes a trivial task if all the maximal matchings of the graph under consideration have the same size. In this case, any maximal matching constructed greedily is a minimum (and maximum) one.  

A graph $G$ is called \emph{equimatchable} if every maximal matching of $G$ has the same cardinality. Equimatchable graphs has attracted a lot of attention in literature; they are mainly considered from structural point of view (see for instance \cite{ClawfreeEqm, eqm_regular, oddcycles-eqm, EK2, Favaron, k-eqm, genus, Plummer}). In this paper, we study equimatchable graphs from another structural perspective; we deal with the stability of this desired property of being equimatchable with respect to edge removals. 
Formally, an equimatchable graph $G$ is called \emph{edge-stable} if $G\setminus {e}$ is equimatchable for each $e \in E(G)$. Edge-stable equimatchable graphs are denoted \emph{$\ESE$-graphs} as a shorthand. 
Conventionally, we assume that a graph which consists of a single vertex is both equimatchable and $\ESE$. Consequently, a graph consisting of a single edge is also $\ESE$.

%Since we deal with the equimatchability of $G\setminus e$, it is natural to require that for any edge $e\in E(G)$, $G\setminus e$ is non-empty. Combining with the fact that $G$ is equimatchable if and only if every connected component of $G$ is equimatchable, this implies in particular that every connected component of $G$ has at least 3 vertices and 2 edges. We will see later (in Proposition \ref{prop:unchanged matching number}) that under this assumption, for an ESE-graph $G$, the size of a maximum matching in $G\setminus e$ remains the same as in $G$ for any $e\in E(G)$; this will allow us to conclude easily in several proofs. Therefore, we assume as a convention that every connected component of an ESE-graph has at least 3 vertices and 2 edges.

We note that equimatchable graphs are closely related to a well-studied graph class called well-covered graphs. Given a graph $G$, a set of vertices $S\subseteq V(G)$ is an \emph{independent set} if vertices of $S$ are pairwise nonadjacent. An independent set is said to be  \emph{maximal} if no other independent set properly contains it. A graph is called \emph{well-covered} if all its maximal independent sets have the same size. It is an easy observation to see that a graph $G$ is equimatchable if and only if its line graph $L(G)$ is well-covered where $L(G)$ is obtained by replacing every edge of $G$ with a vertex in $L(G)$ and where two vertices of $L(G)$ are adjacent if and only if their corresponding edges in $G$ share a common end-vertex. In other words, recognizing equimatchable graphs is equivalent to recognizing well-covered line graphs. It is worth mentioning that although the recognition of well-covered graphs is co-NP-complete \cite{chvatal, SS}, well-covered line graphs (and therefore line graphs of equimatchable graphs) can be recognized in polynomial time \cite{demange-ekim-equi}. In 1979, Staples introduced the class of $W_2$ graphs, better known as 1-well-covered graphs (without isolated vertices); this class coincides with well-covered graphs that remain well-covered upon removal of any vertex \cite{StaplesThesis, Staples79}. We note that, a graph is ESE if and only if its line graph is 1-well-covered graph. The excellent survey of Plummer on well-covered graphs dated 1992 already contains several results on 1-well-covered graphs  \cite{survey}. After the study of some basic properties of 1-well-covered graphs in \cite{StaplesThesis, Staples79}, several papers, including some recent ones, studied this class. We note that like well-covered graphs, the recognition of 1-well-covered graphs is also in co-NP \cite{survey}, however, the complexity of their recognition is unknown to the best of our knowledge. Later on, several papers focused on subclasses of 1-well-covered graphs. In \cite{pinter92} and \cite{pinter95}, Pinter gives the characterization of respectively  4-regular planar 3-connected 1-well-covered graphs, and 1-well-covered graphs which are planar and of girth 4. In \cite{pinter97}, he also provides constructions of infinite families of 1-well-covered graphs of girth 4. Later in \cite{hartnell2006}, Hartnell gives the characterization of 1-well-covered graphs with no 4-cycles. More recently, a characterization of 1-well-covered graphs where every triangle in $G$ is also a dominating set for $G$ is given in \cite{HT2016}. Lastly, Levit and Mandrescu give characterizations of all 1-well-covered graphs in terms of the existence of special independent sets \cite{LM2016}. However, these characterizations rely on the existence of independent sets with some properties that can not be checked in polynomial time. In addition, to the best of our knowledge, none of the above mentioned papers discuss the possibility of using the characterization obtained for a subclass of 1-well-covered graphs for the development of a recognition algorithm. In this paper, we characterize $\ESE$-graphs whose recognition is equivalent to the recognition of 1-well-covered line graphs. In addition, we show that our characterization yields an efficient recognition algorithm. We also briefly mention some related notions such as edge-criticality, vertex-stability and vertex-criticality of equimatchable graphs.

We start in Section \ref{sec:prem} with some definitions and preliminary results on $\ESE$-graphs. Refer to these definitions for various terms used in the following description of our contribution. As justified by Theorem \ref{Thm: all 2 connected graphs}, we divide $\ESE$-graphs into three categories: 2-connected factor-critical $\ESE$-graphs in Section \ref{sec:FC ESE}, $\ESE$-graphs with a cut vertex in Section \ref{sec: ESE-graphs with a cut vertex}, and bipartite $\ESE$-graphs in Section \ref{sec: bip ESE}. These results provide a full characterization of all $\ESE$-graphs yielding an $O(\min(n^{3.376}, n^{1.5}m))$ time recognition algorithm (in Section \ref{sec: Recog ESE}) which is better than the most natural way of recognizing $\ESE$-graphs by checking the equimatchability of $G\setminus e$ for every $e \in E$. Lastly, in Section \ref{sec: Conc remarks}, we first interpret our results in terms of some recent works on 1-well-covered graphs and the so-called shedding vertices. Motivated by this, we introduce the opposite notion of edge-critical equimatchable graphs (called $\ECE$-graphs) which are minimally equimatchable graphs with respect to edge removals. We conclude by giving some insight on our ongoing work about $\ECE$-graphs and pointing out some research directions.

%%%%%%%%%%%%%%%%%%%%%%%%%%%%%%%%%%%%%%%%%%%%%%%%%%%%%%%%%%%%%%%%%%%%%%%%%%%%%%%%%%%%%%%

\section{Definitions and Preliminaries}\label{sec:prem}

Given a graph $G=(V,E)$ and a subset of vertices $I$, $G[I]$ denotes the subgraph of $G$ induced by $I$, and $G\setminus I=G[V\setminus I]$. When $I$ is a singleton $\{v\}$, we denote $G\setminus I$ by $G - v$. We also denote by $G\setminus e$ the graph $G(V,E\setminus\{e\})$. For a subset $I$ of vertices, we say that $I$ is \emph{complete} to another subset $I'$ of vertices (or by abuse of notation, to a subgraph $H=G[I]$) if all vertices of $I$ are adjacent to all vertices of $I'$ (respectively $H$). Note that this is  symmetric: if $I$ is complete to $I'$ then $I'$ is also complete to $I$. We denote by $K_n$ and $C_n$ respectively, a complete graph and a cycle on $n$ vertices. For a vertex $v$, the neighborhood of $v$ in a subgraph $H$ is denoted by $N_H(v)$. We omit the subscript $H$ whenever it is clear from the context. For a subset $V'\subseteq V$, we have $N(V')=(\cup_{v\in V'}N(v))\setminus V'$. We also use the notation $[k]$ to denote the set of integers $\{1,2,\ldots ,k\}$.

Given a graph $G$, the size of a maximum matching of $G$ is called the \emph{matching number} of $G$ and denoted by $\nu(G)$. A matching is \emph{maximal} if no other matching properly contains it. A matching $M$ is said to \emph{saturate} a vertex $v$ if it is the endvertex of an edge in $M$, otherwise $M$ is said to leave vertex $v$ \emph{exposed}. A matching saturating all vertices of $G$ is called a \emph{perfect matching}. If every matching of $G$ extends to a perfect matching, then  $G$ is called \emph{randomly matchable}. Clearly, equimatchable graphs having a perfect matching are exactly randomly matchable graphs. These graphs have been characterized by Sumner \cite{Sumner}.
\begin{lem}\cite{Sumner}\label{lem:randomly}
A connected graph is randomly matchable if and only if it is isomorphic to either $K_{2r}$ or $K_{r,r}$ for some $r\geq 1$.
\end{lem}

We note that $G$ is $\ESE$ if and only if every connected component of $G$ is $\ESE$. Consequently, since $K_2$ is $\ESE$ by convention, a graph consisting of only connected components isomorphic to $K_2$ is $\ESE$.

 The following is a direct translation of a result in \cite{Staples79} for 1-well-covered graphs with no isolated vertices (called the class $W_2$ in the original paper) in terms of $\ESE$-graphs.

\begin{lem}\cite{Staples79}\label{lem:staples79}
A graph $G$  with no connected component isomorphic to $K_2$  is $\ESE$ if and only if $\nu(G \sm e) = \nu(G)$ and $G \sm e$ is equimatchable for every $e \in E(G)$. 
\end{lem}

%\begin{cor}
%Let $G$ be an $\ESE$-graph with no connected component isomorphic to %$K_2$, then $\nu(G)=\nu(G\setminus e)$ for each $e \in E(G)$.
%\end{cor}

%TINAZ: Before revision:
%The analogous result for 1-well-covered graphs in \cite{Staples79} implies that $G$ is $\ESE$ with no connected components isomorphic to $K_2$ if and only if $\nu(G \sm e) = \nu(G)$ and $G \sm e$ is equimatchable for every $e \in E(G)$. 

The following shows that we do not need to require the graph $G \sm e$ to be equimatchable in Lemma \ref{lem:staples79} if we state it only in one direction. Moreover, we can relax the condition that $G$ has no connected component isomorphic to $K_2$ by appropriately choosing the edge $e$ to be removed. We will heavily use Proposition \ref{prop:unchanged matching number} and Corollary \ref{rem:unchanged matching number} in our proofs.

\begin{prop}\label{prop:unchanged matching number}
Let $G$ be an $\ESE$-graph. Then $\nu(G)=\nu(G\setminus e)$ for every $e\in E(G)$ such that the endpoints of $e$ do not form a connected component of $G$.
%$G[\{u,v\}]$ is not a component of $G$ for $e=uv\in E(G)$, 

%for each $e \in E(G)$.
\end{prop}

\begin{proof}
Let $G$ be an $\ESE$-graph and assume for a contradiction that $\nu(G)=\nu(G\setminus e)+1$ for some $e \in E(G)$ whose endpoints do not form a component of $G$. It follows that there is a maximum, thus, maximal matching $M$ (of $G$) containing $e$ for $e=uv \in E(G)$ such that $M \sm e$ is also a maximal matching in $G \sm e$. Since $G[\{u,v\}]$ does not induce $K_2$, there exists $w \in N(u)\cup N(v)$, without loss of generality say $wu \in E(G)$. Moreover $wu$ can be extended to a maximal matching $M'$ of $G$, thus by equimatchability of $G$,  $|M|=|M'|$. Besides, $M'$ is also a maximal matching in $G \sm e$ of size $\nu(G)$, contradiction. 
\end{proof}

 A consequence of Proposition \ref{prop:unchanged matching number} is the following.

\begin{cor}\label{rem:unchanged matching number}
The only connected $\ESE$-graph with a perfect matching is $K_2$.  
\end{cor}

\begin{proof}
Let $G$ be a connected $\ESE$-graph different from $K_2$. Let  $M$ be a maximal matching of $G$ and $e=uv\in E(G)$. By Proposition \ref{prop:unchanged matching number},
$M\sm e$ is not maximum in $G\sm e$, and thus not maximal neither since $G\sm e$ is equimatchable. Therefore, $M\sm e$ can be extended by adding an edge $e'$, which is necessarily adjacent to $e$. It follows that there exists a vertex $w$ which is the endpoint of $e'$
different from $u, v$ such that $w$ is exposed by $M$. Therefore $M$ is not a perfect matching of $G$.
\end{proof}

Recall that a graph is $\ESE$ if and only if every connected component of it is $\ESE$. Therefore, in the remainder of this paper, we consider only connected $\ESE$-graphs. A connected graph $G$ is said to be \emph{$k$-connected} if it has at least $k+1$ vertices and at least $k$ vertices should be removed to make it disconnected. Given a connected graph $G$, a vertex $v\in V(G)$ is called a \emph{cut vertex} if $G-v$ is disconnected. If $G - v$ has a perfect matching for each $v \in V(G)$, then $G$ is called \emph{factor-critical}. For short, an equimatchable factor-critical  graph is called an \emph{$\EFC$-graph}. 

The following is a consequence of the results in \cite{Plummer} (although it is not explicitly mentioned in this paper) and will guide us through our characterization. 

%\begin{thm}\cite{EK1} \label{Thm: odd graph + ind. number 2}
%If $G$ is odd graph and $\alpha(G)\leq 2$, then $G$ is equimatchable.
%\end{thm}

\begin{thm}\cite{Plummer} \label{Thm: all 2 connected graphs}
A 2-connected equimatchable graph is either factor-critical or bipartite or $K_{2t}$ for some $t\geq 2$.
\end{thm}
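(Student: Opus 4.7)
The plan is to split the analysis on whether $G$ admits a perfect matching and combine two classical tools: Sumner's theorem on randomly matchable graphs and the Gallai--Edmonds decomposition.

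First, I would dispatch the case where $G$ has a perfect matching. In this case, equimatchability forces every maximal matching to have cardinality $|V(G)|/2$, hence every maximal matching is perfect; equivalently, every matching extends to a perfect one, so $G$ is randomly matchable. By Sumner's theorem (1979), every connected randomly matchable graph is isomorphic to $K_{2t}$ or $K_{t,t}$ for some $t\geq 1$; the former lands in the $K_{2t}$ category and the latter in the bipartite category, closing this case.

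Next, assume $G$ has no perfect matching. I would show that $G$ must then be factor-critical. Consider the Gallai--Edmonds decomposition $V(G)=D\cup A\cup C$, where $D$ is the set of vertices left exposed by at least one maximum matching, $A=N(D)\sm D$, and $C=V(G)\sm(A\cup D)$. Since $G$ has no perfect matching, $D\neq\emptyset$. The Gallai--Edmonds structure theorem tells us that each component of $G[D]$ is factor-critical, $G[C]$ admits a perfect matching, and every maximum matching saturates $A$ into pairwise distinct components of $G[D]$. The goal is to rule out $A\cup C\neq\emptyset$: combined with connectedness of $G$, this yields $V(G)=D$ consisting of a single factor-critical component, i.e., $G$ is factor-critical.

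The main obstacle, and the heart of the argument, is ruling out the mixed configurations using 2-connectedness. If $C\neq\emptyset$, then because $G$ is connected and $C$ has no neighbor in $D$ by definition, there is an edge $ca$ with $c\in C$ and $a\in A$; the plan is to combine a perfect matching of $G[C]$ with a carefully chosen near-perfect matching on each factor-critical component of $G[D]$ and an augmenting selection on $A$ so as to construct a maximal matching of size different from $\nu(G)$, contradicting equimatchability. If $A\neq\emptyset$ with $C=\emptyset$, I would pick $a\in A$ with a neighbor $d\in D$ lying in a component $D_0$ of $G[D]$, exploit factor-criticality of $D_0$ to build a maximum matching in which $a$ is matched into a different component, and then use a second neighbor of $a$ (guaranteed by 2-connectedness) to produce a maximal matching of a distinct cardinality. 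In every configuration a contradiction arises, so $A\cup C=\emptyset$ and $G$ is factor-critical, as required.
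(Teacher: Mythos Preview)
The paper does not give its own proof of this theorem; it is quoted from \cite{Plummer} as a known structural fact. So there is nothing to compare your argument against, but your proposal itself contains a genuine error in the second case.

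Your plan in the no-perfect-matching case is to prove that $A(G)\cup C(G)=\emptyset$, hence that $G$ is factor-critical. This cannot work: take $G=K_{2,3}$ (or more generally $K_{r,s}$ with $2\le r<s$). It is $2$-connected, equimatchable, has no perfect matching, and is bipartite but \emph{not} factor-critical. In its Gallai--Edmonds decomposition the larger side is $D(G)$ (three isolated vertices) and the smaller side is $A(G)\neq\emptyset$. Thus the contradiction you hope to obtain from ``$A\neq\emptyset$, $C=\emptyset$'' simply does not exist; your sketch for that subcase (``use a second neighbour of $a$ to produce a maximal matching of a distinct cardinality'') fails because in $K_{2,3}$ every maximal matching already has the same size.

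The correct dichotomy in the no-perfect-matching case goes as follows. One first shows (this is Lemma~\ref{lem: Gallai- C(G) empty and A(G) is ind. set} in the paper, also due to Plummer) that $C(G)=\emptyset$ and $A(G)$ is independent. If $A(G)=\emptyset$ then $G=G[D(G)]$ is a single factor-critical component and you are done. If $A(G)\neq\emptyset$, the task is \emph{not} to reach a contradiction but to prove that every component of $G[D(G)]$ is a single vertex; once that is established, $D(G)$ is independent too and $G$ is bipartite with parts $A(G)$ and $D(G)$. Showing that nontrivial factor-critical components of $G[D(G)]$ are incompatible with $2$-connectedness plus equimatchability is exactly where the work lies, and it is this step (not the elimination of $A$) that requires the $2$-connectedness hypothesis.
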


We aim at characterizing all $\ESE$-graphs apart from $K_1$ and $K_2$ which are assumed to be $\ESE$ by convention. One can easily observe that for $t\geq 2$, the equimatchability of $K_{2t}$ is lost when an edge is removed, hence, $K_{2t}$ is not $\ESE$. It follows that $\ESE$-graphs can be studied under three categories: 2-connected factor-critical, 2-connected bipartite and those having a cut vertex. Note that a bipartite graph can not be factor-critical, hence, 2-connected factor-critical $\ESE$-graphs and 2-connected bipartite $\ESE$-graphs form a partition of 2-connected $\ESE$-graphs into two disjoint subclasses. On the other hand, an $\ESE$-graph with a cut vertex could be either factor-critical, or bipartite, or none of them. Therefore, a separate characterization of each one of these three categories would lead to a full characterization of all $\ESE$-graphs with possibly some overlaps (some graphs belonging to two categories).  However, by showing in Section \ref{sec: ESE-graphs with a cut vertex} that $\ESE$-graphs with a cut vertex are bipartite, we provide a characterization of all $\ESE$-graphs containing only two exclusive cases: (2-connected) factor-critical $\ESE$-graphs (Section \ref{sec:FC ESE}) and  bipartite $\ESE$-graphs (Section \ref{sec: bip ESE}).

%%%%%%%%%%%%%%%%%%%%%%%%%%%%%%%%%%%%%%%%%%%%%%%%%%%

%\section{Characterization of ESE-graphs} \label{sec:ESE}

%\textcolor{red}{Explain the structure of this section here. }

\section{Factor-critical ESE-graphs} \label{sec:FC ESE} 

Let us first underline that although we seek to characterize factor-critical $\ESE$-graphs which are  2-connected, all the results in this section are valid for any factor-critical $\ESE$-graph. Note that factor-critical graphs are connected but not necessarily 2-connected. However, it turns out that factor-critical $\ESE$-graphs are also 2-connected (See Corollary \ref{cor: FC ESE are 2conn}).

Remind that a factor-critical graph $G$ has $\nu(G)=(\vert V(G) \vert-1)/2$ and if it is equimatchable then all maximal matchings have size  $(\vert V(G) \vert-1)/2$. 

\begin{lem} \label{lem: defn equim}
Let $G$ be a factor-critical graph. $G$ is equimatchable if and only if there is no independent set $S$ such that $|S|=3$ and $G\setminus S$ has a perfect matching.
\end{lem}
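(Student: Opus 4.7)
The plan is to handle the two directions separately. Necessity will be a short argument by contradiction, while sufficiency will be a constructive iterated augmentation that converts any witness of non-equimatchability into the desired independent set.

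For the necessity direction, I would assume $G$ is equimatchable and factor-critical, so $|V(G)|$ is odd and every maximal matching has size $\nu(G)=(|V(G)|-1)/2$. If some independent set $I$ with $|I|=3$ were such that $G\setminus I$ admits a perfect matching $M$, then, viewed inside $G$, this $M$ leaves exactly the three vertices of $I$ exposed; since $I$ is independent, no edge of $G$ joins two exposed vertices, so $M$ is already maximal in $G$. But $|M|=(|V(G)|-3)/2<\nu(G)$, contradicting equimatchability.

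For the sufficiency direction, I would prove the contrapositive: supposing $G$ is not equimatchable, produce the desired $I$. Start from any maximal matching $M$ of $G$ with $|M|<\nu(G)$ and let $U$ be its exposed set. By maximality $U$ is independent, and factor-criticality forces $|V(G)|$ to be odd, so $|U|=|V(G)|-2|M|$ is odd; combined with $|M|<(|V(G)|-1)/2$ we get $|U|\geq 3$. If $|U|=3$, take $I:=U$ and $M$ itself is a perfect matching of $G\setminus I$. Otherwise $|U|\geq 5$, so $M$ is not maximum and admits an augmenting path $P$ whose endpoints lie in $U$; replacing $M$ by $M\triangle E(P)$ yields a matching of size $|M|+1$ whose exposed set equals $U$ minus those two endpoints—still a subset of the original $U$, hence still independent, and still of odd size. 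I would iterate this augmentation until the exposed set has size exactly $3$, at which point its vertices form the required $I$.

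The one point needing care is the termination of the iteration. Parity keeps the exposed-set size odd throughout, so it can only drop through the values $\ldots,5,3,1$; the process stops at $3$ rather than at $1$ because $|U|=3$ already delivers the conclusion, whereas $|U|\geq 5$ strictly implies that the current matching has size less than $\nu(G)$ and so an augmenting path still exists. Everything else—that the symmetric difference of $M$ with an augmenting path is a matching whose exposed set loses exactly the two endpoints—is standard and needs no separate verification.
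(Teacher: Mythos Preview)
Your proposal is correct and follows essentially the same approach as the paper: both directions match, and in the sufficiency direction you, like the paper, repeatedly augment a too-small maximal matching until exactly three vertices remain exposed, using that the exposed set stays a subset of the original independent set. Your write-up is in fact slightly more careful than the paper's in making explicit why the exposed set remains independent after each augmentation (and hence why the intermediate matchings are still maximal), but the underlying argument is the same.
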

\begin{proof}
Let $G$ be an $\EFC$-graph. Then each maximal matching is of size $(\vert V(G) \vert-1)/2$. If there is an independent set $S$ with $\vert S \vert = 3$ such that $G\setminus S$ has a perfect matching $M$, then $M$ is also a maximal matching in $G$ and has size strictly less than $(\vert V(G)\vert -1 )/2$, a contradiction with being equimatchable. \\
Now, we suppose the converse. That is, for all independent set $S$ with $\vert S\vert = 3$, $G \setminus S$ has no perfect matching. Assume $G$ is not equimatchable and admits therefore a maximal matching of size strictly less than $\nu(G)=(\vert V(G) \vert-1)/2$.  Remark that, if there is a maximal matching $M$ such that $\vert M \vert \leq \nu(G)-2$, then there is also a maximal matching $M'$ of size $\nu(G)-1$ which can be obtained from $M$ by repetitively using augmenting chains (whose existence are guaranteed by the fact that the matching under consideration is not maximum). Hence, there are exactly 3 vertices exposed by the matching $M'$. They form an independent set $S$ in $G$ and $M'$ is a perfect matching in $G\setminus S$. This is a contradiction. So, $G$ is equimatchable.
\end{proof}

%and thus of size $m-1$ of $G$ which is also a maximal matching of $G \sm e$
%In Lemma \ref{lem: defn equim}, we need to maximality condition in just one direction. So we have the following corollary. 

%\begin{cor}
%Let $G$ be a factor-critical graph. If $G$ is equimatchable, there is no independent set $S$ with 3 vertices such that $G\setminus S$ has perfect matching.
%\end{cor}

The following equivalence for an $\EFC$-graph to be edge-stable will be very useful. Note that factor-critical graphs have no component isomorphic to $K_2$, thus Proposition \ref{prop:unchanged matching number} applies to every edge of an edge-stable $\EFC$-graph.

\begin{lem} \label{lem: defn edg-hered equim}
Let $G$ be an $\EFC$-graph. Then $G$ is edge-stable if and only if there is no induced $\overline{P_3}$ in $G$ such that $G\setminus \overline{P_3} $ has a perfect matching.
\end{lem}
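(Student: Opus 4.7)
The plan is to prove each direction by exploiting the identity $\nu(G\setminus e)=\nu(G)=(n-1)/2$, which holds in an $\EFC$-graph because factor-criticality provides, for any edge $e=xy$, a perfect matching of $G-x$ that already avoids $e$.

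For the forward direction, I would start from the contrapositive and suppose that $G$ carries an induced $\overline{P_3}$ on vertices $\{u,v,w\}$, where $uv$ is the unique edge, and that $G\setminus\{u,v,w\}$ admits a perfect matching $M$. I want to exhibit an edge whose removal destroys equimatchability. The natural candidate is $e=uv$: in $G\setminus e$ the three vertices $u,v,w$ form an independent set (the two non-edges of the induced $\overline{P_3}$ together with the now-removed $uv$), so $M$ is a maximal matching of $G\setminus e$ of size $(n-3)/2$. On the other hand, a perfect matching of $G-u$, obtained from factor-criticality, is a matching of $G\setminus e$ of size $(n-1)/2$, showing that $G\setminus e$ has maximal matchings of two different sizes. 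Hence $G$ is not $\ESE$.

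For the reverse direction, assume $G$ is not $\ESE$ and choose an edge $e=xy$ for which $G\setminus e$ fails to be equimatchable. As above, $\nu(G\setminus e)=(n-1)/2$, so $G\setminus e$ possesses a maximal matching strictly smaller than its matching number. By repeatedly applying augmenting paths inside $G\setminus e$, I may assume this maximal matching $M$ has size exactly $(n-3)/2$, leaving an independent set $I$ of three exposed vertices in $G\setminus e$. The key observation is that $M$ cannot be maximal in $G$: otherwise $G$ would have a maximal matching of size $(n-3)/2<(n-1)/2$, contradicting equimatchability of $G$. The only edge of $G$ not in $G\setminus e$ is $e$ itself, so addability of $M$ forces $\{x,y\}\subseteq I$. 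Writing $I=\{x,y,c\}$, in $G$ we have $xy\in E(G)$ while $xc,yc\notin E(G)$ because $I$ was independent in $G\setminus e$; so $\{x,y,c\}$ induces a $\overline{P_3}$, and $M$ is a perfect matching of $G\setminus\{x,y,c\}$, which is precisely the forbidden configuration.

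The only mildly delicate step is justifying the reduction to a maximal matching of size exactly $(n-3)/2$ in the reverse direction; this relies on the standard fact that any maximal matching of strictly smaller size can be enlarged one pair at a time along augmenting chains until it reaches size $\nu-1$. Everything else is a direct use of Lemma~\ref{lem: defn equim}-style counting combined with factor-criticality to control $\nu(G\setminus e)$, so I do not foresee a substantive obstacle beyond correctly pinning down which two of the three exposed vertices must be the endpoints of the removed edge.
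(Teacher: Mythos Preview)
Your proof is correct and follows essentially the same approach as the paper's: the forward direction exhibits $e=uv$ as the bad edge by noting that $M$ is a small maximal matching of $G\setminus e$, and the reverse direction locates the three exposed vertices of a size-$(n-3)/2$ maximal matching in $G\setminus e$ and argues that two of them must be $x,y$. Your write-up is in fact more careful than the paper's in the reverse direction---you spell out the augmenting-path reduction to size $(n-3)/2$ and the reason $\{x,y\}\subseteq I$ (namely, that $M$ would otherwise be maximal in $G$ itself), whereas the paper simply asserts these facts.
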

\begin{proof}
Let $G$ be an $\EFC$-graph which is also edge-stable. Assume $G$ has an induced $\overline{P_3}$ on vertices $\{v,u_1,u_2\}$ with an edge between $u_1$ and $u_2$ such that $G\setminus \{v,u_1,u_2\} $ has a perfect matching $M$. Then, $M$ is a maximal matching of $G \sm u_1u_2$ of size one less than the matching $M\cup \{u_1u_2\}$ of $G$, contradicting the edge-stability of $G$ by Proposition \ref{prop:unchanged matching number}.
\\
Now, let us consider the converse. Assume for a contradiction that $G$ is not edge-stable. Then there is at least one edge $u_1u_2$ such that $G \setminus u_1u_2$ is not equimatchable. This means, in particular, that there is a maximal matching of $G \setminus u_1u_2$ leaving $u_1,u_2$ and one more vertex, say $v$, exposed (remind that all maximal matchings of $G$ leave exactly one vertex exposed). Therefore, $v$ is not adjacent to $u_1$ and $u_2$, thus $G[\{v,u_1,u_2\}]\cong\overline{P_3}$ and $M\setminus \{u_1u_2\}$ is a perfect matching of $G\setminus \{v,u_1,u_2\}$, a contradiction. 
\end{proof}

Although it is not directly related to further results, the following gives some insight about the structure of $\ESE$-graphs which are factor-critical. Remind that $diam(G)= \max \{d(u,v)| u,v\in V(G)\}$ where $d(u,v)$ is the \emph{distance} (i.e., the length of the shortest path) between vertices $u$ and $v$.

\begin{cor} \label{cor: edge hered diam 2}
Let $G$ be an $\EFC$-graph. If $G$ is edge-stable, then $diam(G)\leq 2$.
\end{cor}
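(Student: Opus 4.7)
The plan is to argue by contradiction, assuming $\text{diam}(G)\geq 3$ and producing an induced $\overline{P_3}$ in $G$ whose removal leaves a perfect matching, which is exactly the configuration forbidden by Lemma \ref{lem: defn edg-hered equim}.

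Pick vertices $u,v\in V(G)$ with $d(u,v)\geq 3$. Since $G$ is factor-critical, $G-v$ has a perfect matching $M_v$; let $a\in N(u)$ denote the partner of $u$ in $M_v$. The first key point is that $\{u,a,v\}$ induces a $\overline{P_3}$: the edge $ua$ lies in $M_v$, while $uv\notin E(G)$ and $av\notin E(G)$, the latter because any neighbor $a$ of $u$ satisfies $d(a,v)\geq d(u,v)-1\geq 2$.

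The second key point is that $M_v\setminus\{ua\}$ saturates exactly $V(G)\setminus\{u,a,v\}$ (since $M_v$ is perfect on $G-v$), so it is a perfect matching of $G\setminus\{u,a,v\}$. Combining these two observations contradicts Lemma \ref{lem: defn edg-hered equim}, forcing $\text{diam}(G)\leq 2$.

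The argument is quite short and I do not expect any serious obstacle: factor-criticality supplies the matching for free, and the triangle inequality on distances handles the non-edges. The only things worth verifying explicitly are that $a$ is well defined (which uses $u\neq v$, guaranteed by $d(u,v)\geq 3$) and that $a\neq v$ (automatic since $a\in V(G-v)$).
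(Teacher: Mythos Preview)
Your proof is correct and takes essentially the same approach as the paper: both use the perfect matching $M_v$ of $G-v$ supplied by factor-criticality, and both observe that any matching edge with neither endpoint adjacent to $v$ yields an induced $\overline{P_3}$ whose removal leaves a perfect matching, contradicting Lemma~\ref{lem: defn edg-hered equim}. The only cosmetic difference is that the paper phrases this directly (every edge of $M_v$ must meet $N(v)$, so every vertex is within distance $2$ of $v$), whereas you phrase it as a contradiction starting from a pair $u,v$ at distance at least $3$.
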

\begin{proof}
Consider an $\EFC$-graph $G$ which is also edge-stable. Then, for each vertex $v\in V(G)$, there is a matching $M_v$ leaving $v$ as the only exposed vertex. By Lemma \ref{lem: defn edg-hered equim}, $v$ is adjacent to at least one of the endpoints of each edge in $M_v$, since otherwise $G$ contains  an induced subgraph $\overline{P_3}$ such that $G \setminus \overline{P_3}$ has a perfect matching. Since $M_v$ saturates all vertices except $v$, we have  $d(v,u)\leq 2$ for each $u\in V(G)$. By selecting $v$ arbitrarily, we have  $d(v,u)\leq 2$ for every pair of vertices $u,v\in  V(G)$ and therefore, $diam(G)\leq 2$.
\end{proof}

In what follows, we will be using a special decomposition of a factor-critical $\ESE$-graph. Let $G$ be a factor-critical $\ESE$-graph, and $M_v$ a perfect matching of $G-v$ for some $v\in V(G)$. We note that there is no edge $yz \in M_v$ such that  $\{y,z\}\cap N(v)= \emptyset$ since otherwise $\{v,y,z\}$ induces a $\overline P_3$ whose removal leaves the perfect matching $M_v\sm \{yz\}$ of $G\sm \{v,y,z\}$, contradicting the edge-stability of $G$ by Lemma \ref{lem: defn edg-hered equim}. Therefore, the following canonical decomposition exists for any factor-critical $\ESE$-graph.

\begin{defn}
Let $G$ be a factor-critical $\ESE$-graph. Then, a \emph{canonical decomposition $(N_1,N'_1,N_2)$ of $G$} with respect to a vertex $v$ and a perfect matching $M_v$ of $G-v$ is defined as follows (see Figure \ref{fig:minimal matching M_v isolating v}):
\begin{itemize}
\item $N_1=\{u_1, \ldots , u_t\}$ is the set of neighbors of $v$ which are matched with non-neighbors of $v$, 
\item $N_1'=\{u'_1, \ldots , u'_t\}$ is the set of vertices matched to $N_1$, and
\item $N_2=\{x_1,x'_1, \ldots , x_p, x'_p\}$ is the set of neighbors of $v$ matched to other neighbors of $v$ with matching edges $\{x_1x'_1, \ldots , x_px'_p\}$,
\end{itemize}
where  $V(M_v)=N_1\cup N_1' \cup N_2$ holds.
\end{defn}

Let us emphasize that although the sets $N_1, N'_1, N_2$ are defined with respect to $v$ and $M_v$, we do not adopt an additional index to denote this for the sake of simplicity.

The following lemma is an important intermediary result towards the achievement of our goal. We call an independent set $S$ \emph{nontrivial} if $|S|\geq 2$.

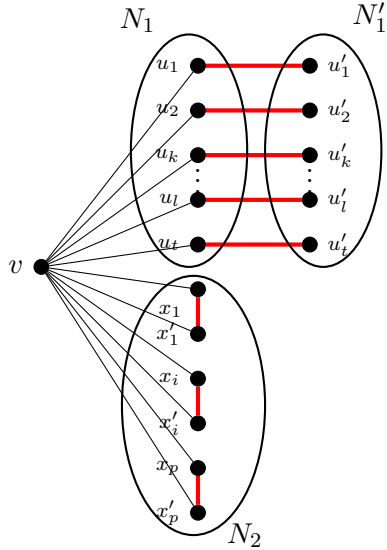
\begin{figure}[hbt] 
\begin{center}
 \resizebox{0.35\textwidth}{!}{%
\begin{tikzpicture}[scale=.6]

\node [noddee] at (2,4.5) (u1) [label=left:{\scriptsize $u_1$}] {};
\node [noddee] at (2,3.5) (u2) [label=left:{\scriptsize $u_2$}] {};
\node [noddee] at (2,2.5) (ui)[label=left:{\scriptsize $u_k$}] {};
\node [noddee] at (2,1.5) (uj) [label=left:{\scriptsize $u_l$}] {};
\node [noddee] at (2,.5) (ut) [label=left:{\scriptsize $u_t$}] {};
\node [noddee] at (2,-.5) (x1) [label=below left:{\scriptsize $x_1$}] {};
\node [noddee] at (2,-1.5) (x11) [label=left:{\scriptsize $x_1'$}] {}
	edge [ultra thick, red] (x1);
\node [noddee] at (2,-2.5) (xj) [label=left:{\scriptsize $x_i$}] {};
\node [noddee] at (2,-3.5) (xjj) [label=left:{\scriptsize $x_i'$}] {}
	edge [ultra thick, red] (xj);
\node [noddee] at (2,-4.5) (xp) [label=left:{\scriptsize $x_p$}] {};
\node [noddee] at (2,-5.5) (xpp) [label=left:{\scriptsize $x_p'$}] {}
	edge [ultra thick, red] (xp);
\node [noddee] at (-1.5,0) (v) [label=left:$v$] {}
	edge [] (u1)
	edge [] (u2)
	edge [] (ui)
	edge [] (uj)
	edge [] (ut)
	edge [] (x1)
	edge [] (x11)
	edge [] (xj)
	edge [] (xjj)
	edge [] (xp)
	edge [] (xpp);	
\node [noddee] at (4.5,4.5) (u11) [label=right:{\scriptsize $u_1'$}] {}
	edge [ultra thick, red] (u1);
\node [noddee] at (4.5,3.5) (u22) [label=right:{\scriptsize $u_2'$}] {}
	edge [ultra thick, red] (u2);
\node [noddee] at (4.5,2.5) (uii)[label=right:{\scriptsize $u_k'$}] {}
	edge [ultra thick, red] (ui);
\node [noddee] at (4.5,1.5) (ujj) [label=right:{\scriptsize $u_l'$}] {}
	edge [ultra thick, red] (uj);
\node [noddee] at (4.5,.5) (utt) [label=right:{\scriptsize $u_t'$}] {}
	edge [ultra thick, red] (ut);
\draw[thick] (1.8,2.6) ellipse (1.3cm and 2.6cm);	
\draw[thick](4.8,2.6) ellipse (1.3cm and 2.6cm);	
\draw[thick] (1.9,-3.1) ellipse (1.6cm and 2.9cm);		
\node at (1.5,4.8) (N1) [label=above left:$N_1$] {};		
\node at (5,4.8) (N11) [label=above right:$N_1'$] {};			
\node at (2.2,-6) (N2) [label=right:$N_2$] {};	
\node at (2,2.1) (dot) {$\vdots$};
\node at (4.5,2.1) (dot) {$\vdots$};		
\end{tikzpicture}
}

\end{center}
\caption{A canonical decomposition of a factor-critical $\ESE$-graph $G$ which is not an odd clique, for some vertex $v\in V(G)$ such that $d(v)<|V(G)-1|$ and a perfect matching $M_v$ of $G-v$ which is shown with bold (red) edges.}
\label{fig:minimal matching M_v isolating v}
\end{figure}

\setcounter{claim}{0}
\begin{lem} \label{lem: exist a clique S}
Let $G$ be a factor-critical graph with at least $7$ vertices. If $G$ is an $\ESE$-graph which is not an odd clique, then there is a nontrivial independent set $S$ which is complete to $G\sm S$.
\end{lem}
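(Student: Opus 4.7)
The plan is to pick a vertex $v \in V(G)$ that has at least one non-neighbor---such a $v$ exists since $G$ is not the odd clique---and to show that $S := \{v\} \cup \overline{N}(v)$ is a nontrivial independent set complete to $V(G) \setminus S = N(v)$, where $\overline{N}(v) := V(G) \setminus (N(v) \cup \{v\})$. Since $|\overline{N}(v)| \geq 1$, we have $|S| \geq 2$ automatically, so the non-trivial claims are: (a) $\overline{N}(v)$ is independent in $G$, and (b) every vertex of $\overline{N}(v)$ is adjacent to every vertex of $N(v)$. A concrete extremal choice of $v$ will be needed to close (a)---say $v$ chosen so that $|\overline{N}(v)|$ is maximum.

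The central tool is Lemma \ref{lem: defn edg-hered equim}: since $G$ is edge-stable and $\EFC$, for every triple $\{a,b,c\}$ inducing a $\overline{P_3}$ in $G$ (edge $bc$, apex $a$), the graph $G \setminus \{a,b,c\}$ has no perfect matching. Combined with factor-criticality---which furnishes, for each $x \in V(G)$, a perfect matching of $G-x$---this gives the key strategy: to contradict edge-stability on a suspected $\overline{P_3}$ $\{a,b,c\}$, it suffices to produce a perfect matching $M$ of $G-a$ with $bc \in M$, because $M \setminus \{bc\}$ is then a perfect matching of $G \setminus \{a,b,c\}$. To prove (a), I assume for contradiction that $v_1 v_2 \in E$ with $v_1, v_2 \in \overline{N}(v)$, so that $\{v, v_1, v_2\}$ is an induced $\overline{P_3}$ with apex $v$. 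Fixing a perfect matching $M$ of $G-v$, the easy subcase $v_1 v_2 \in M$ gives the contradiction immediately. Otherwise $v_1 w_1, v_2 w_2 \in M$ with $w_i \neq v_{3-i}$, and the swap $M' := (M \setminus \{v_1 w_1, v_2 w_2\}) \cup \{v_1 v_2\}$ has size $\nu(G)-1$ (by Proposition \ref{prop:unchanged matching number}) and exposes $\{v, w_1, w_2\}$. Since $G$ is equimatchable, $M'$ cannot be maximal, so an edge must be present among $\{v, w_1, w_2\}$. If $w_1 w_2 \in E$, augmenting $M'$ by $w_1 w_2$ yields a perfect matching of $G-v$ containing $v_1 v_2$, closing the case. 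If instead some $w_i \in \overline{N}(v)$, then $\{v, v_i, w_i\}$ is another induced $\overline{P_3}$ whose edge $v_i w_i$ already lies in $M$, and again the strategy closes the case. For (b), I suppose $u \in \overline{N}(v)$ and $w \in N(v)$ with $uw \notin E$, so that $\{u, v, w\}$ induces a $\overline{P_3}$ with apex $u$ and edge $vw$, then run the parallel matching-swap argument on a perfect matching of $G-u$, swapping the partners of $v$ and $w$ for the edge $vw$.

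The main obstacle is the remaining subcase of (a), in which $w_1, w_2 \in N(v)$ and $w_1 w_2 \notin E$: none of the direct applications of Lemma \ref{lem: defn edg-hered equim} terminate the argument, since the augmented matching $M' \cup \{vw_1\}$ is a perfect matching of $G - w_2$ containing $v_1 v_2$ but $\{w_2, v_1, v_2\}$ is not itself an induced $\overline{P_3}$. Here one must either iterate the swap argument on this new perfect matching and apply the structural restrictions inductively, or invoke the extremal choice of $v$ (e.g.\ maximality of $|\overline{N}(v)|$) to produce a new violation and close the argument. Once this case is handled, (a) and (b) together give the desired structural conclusion that $S$ is a nontrivial independent set complete to $V(G) \setminus S$.
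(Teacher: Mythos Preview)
Your proposed set $S=\{v\}\cup\overline{N}(v)$ together with the extremal choice of $v$ maximizing $|\overline{N}(v)|$ does not work, and the ``main obstacle'' subcase you flag in (a) is a genuine failure, not a closable gap. Take $G$ on seven vertices with an independent triple $T=\{s_1,s_2,s_3\}$ complete to $\{a,b,c,d\}$, where $abc$ is a triangle and $d$ is isolated in $G[\{a,b,c,d\}]$; this $G$ is factor-critical and lies in $\G_2$, hence is $\ESE$. The vertex maximizing $|\overline{N}(\cdot)|$ is $v=d$, but $\overline{N}(d)=\{a,b,c\}$ is a triangle, so (a) is simply false for this $v$. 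Running your swap argument with $v_1=a$, $v_2=b$ and $M=\{s_1a,\,s_2b,\,s_3c\}$ lands precisely in the obstacle subcase ($w_1=s_1$, $w_2=s_2\in N(d)$, $s_1s_2\notin E$), confirming that no iteration or extremality trick can rescue it. Choosing instead $v=a$ makes (a) trivially true ($\overline{N}(a)=\{d\}$) but (b) false, since $d$ is non-adjacent to $b,c\in N(a)$; the parallel swap argument you propose for (b) hits the analogous obstacle. Only $v\in T$ gives the right $S$, and neither extreme of $|\overline{N}(v)|$ selects $T$.

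The paper's proof is organized differently and does not commit to the shape of $S$ in advance. It fixes an arbitrary non-universal $v$, takes a perfect matching $M$ of $G-v$, and partitions the remaining vertices into $N_1$ (neighbours of $v$ matched to non-neighbours), $N_1'=\overline{N}(v)$, and $N_2$ (neighbours matched among themselves). It then branches on whether $N_1'$ is independent. When $N_1'$ is \emph{not} independent---exactly the situation where your claim (a) fails---the paper shows $N_2=\emptyset$ and takes $S=N_1\subseteq N(v)$, a set disjoint from $\{v\}\cup\overline{N}(v)$. When $N_1'$ is independent, a further case analysis produces either $S=N_1'\cup\{v\}$ or yet another set assembled from $N_1$ and half of $N_2$. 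The existence of $S$ is thus established by constructing different candidates in different cases, guided by the matching decomposition, rather than by locating a single vertex whose non-neighbourhood already has the required form.
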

\begin{proof}
Assume $G$ is a factor-critical $\ESE$-graph with at least $7$ vertices and which is not an odd clique, then let us show that $G$ contains an independent set $S$ with $\vert S \vert \geq 2$ which is complete to $G\sm S$.  Consider a canonical decomposition $(N_1,N'_1,N_2)$ of $G$ with respect to some vertex $v$ such that $d(v)<|V(G)-1|$ and a perfect matching $M_v$ of $G-v$. As $d(v)<|V(G)-1|$, clearly $N_1\neq \emptyset$ and $N_1'\neq \emptyset$.

The first step of the proof is to show that $N_1$ is an independent set and $N_1'$ is complete to $N_1$ (Claim 2). After that, we split the proof into two cases: $N_1'$ is independent or not. We show that if $N_1'$ is not an independent set, then $S=N_1=\{u_1,u_2,\ldots,u_t\}$ is a nontrivial independent set which is complete to $G\sm N_1$.  The rest of the proof is dedicated to the case where $N_1'$ is an independent set. In this case, we show that
\begin{itemize}
\item if  $N_2$ is complete to $N_1'$, then $S=N_1' \cup \{v\}$ is an independent set complete to $G \sm S$,
\item else  $S=\{u_1,u_2,\ldots,u_t\}\cup\{x_1',x_2',\ldots,x_p'\}$ is an independent set complete to $G\sm S$.
\end{itemize}

%Since $G$ is not $K_{2r+1}$, it has a vertex $v\in V(G)$ with $ d(v)<2r$. As $G$ is factor-critical, $G-v$ has a perfect matching $M_v$. We define three sets of vertices according to $M_v$: let $N_1$ be the set of neighbors of $v$ which are matched with non-neighbors of $v$, $N_1'$ be the set of vertices matched to $N_1$, and $N_2$ be the set of neighbors of $v$ matched to each other. Let  $\vert N_1 \vert= \vert N_1' \vert=t$ where  $t \neq 0$ due to $d(v)<2r$, and $\vert N_2 \vert=2p$. Note that $V(M_v)=N_1\cup N_1' \cup N_2$ since otherwise there is an edge $yz$ such that $\{y,z\}\cap (N_1\cup N_1' \cup N_2)= \emptyset $ and $\{v,y,z\}$ induce a $\overline P_3$ whose removal leaves the perfect matching $M_v\sm \{yz\}$ of $G\sm \{v,y,z\}$, contradiction by Lemma \ref{lem: defn edg-hered equim} (see Figure \ref{fig:minimal matching M_v isolating v}).

%%%%%%%%%%%%%%%%%%%%%%%%%%%%%%%%%%%%%%%%%%%%%%%%%%%%%%%%%%%%%%%%%%%%%%%%%%%%%%

\noindent
We now show that $N_1$ is an independent set. For otherwise, there are $ u_k$ and $u_l$ in $N_1$ such that $u_ku_l \in E(G)$. If $u_k'u_l' \notin E(G)$, then $\{u_k',u_l',v\}$ is an independent set $I$ of size $3$ and  $ (M_v\setminus \{u_ku_k',u_lu_l'\} ) \cup \{u_ku_l\} $ is a perfect matching of $G \setminus I$, implying that $G$ is not equimatchable by Lemma \ref{lem: defn equim}. If $u_k'u_l' \in E(G)$, then $\{u_k',u_l',v\}$ induce a $\overline{P_3}$ in $G$ such that $G \setminus \overline{P_3}$ has the same perfect matching as previously, contradicting that $G$ is edge-stable by Lemma \ref{lem: defn edg-hered equim}. It follows that $N_1$  is an independent set.\medskip

\begin{claim}\label{cl:N2 empty}
If $N_1'$ is not an independent set then $N_2= \emptyset$.
\end{claim}
\begin{cproof}
Assume for a contradiction that $N_1'$ is not independent but $N_2 \neq \emptyset$. We first claim that for an edge $u_k'u_l'$ in $N_1'$, each one of $u_k$ and $u_l$ is adjacent to at least one of $x_i$ and $x'_i$ for an edge $x_ix_i' \in M_v$. Indeed, $u_k$ is adjacent to $x_i$ or $x_i'$, since otherwise $G[\{u_k,x_i,x_i'\}]=\overline{P_3}$ and $ M_v\setminus \{x_ix_i',u_ku_k',u_lu_l'\} ) \cup \{vu_l,u_k'u_l'\} $ gives a perfect matching in $G\sm \{u_k,x_i,x_i'\}$, a contradiction by Lemma \ref{lem: defn edg-hered equim}. Similarly, $u_l$ is adjacent to $x_i$ or $x_i'$. 

Now, there are two possible cases. 
Suppose first $x\notin N(u_k)\cup N(u_l)$ for some $x\in \{x_i,x_i'\}$. Then, without loss of generality $\{u_k,u_l\} \in N(x_i)$. In this case, $\{u_k,u_l,x_i'\}$ is an independent set $I$ and  $( M_v\setminus \{x_ix_i',u_ku_k',u_lu_l'\} ) \cup \{vx_i,u_k'u_l'\} $ gives a perfect matching in $G\setminus I$, a contradiction with $G$ being equimatchable by Lemma \ref{lem: defn equim}. 
Otherwise, we have without loss of generality $x_iu_k,x_i'u_l \in E(G)$, then $G[\{v,u_k',u_l'\}]=\overline{P_3}$ and  $( M_v\setminus \{x_ix_i',u_ku_k',u_lu_l'\} ) \cup \{x_iu_k,x_i'u_l\} $ gives a perfect matching of $G \sm \{v,u_k',u_l'\}$, a contradiction by Lemma \ref{lem: defn edg-hered equim}. As both cases are concluded with a contradiction, it follows that $N_2 = \emptyset$. 
\end{cproof}\\

\begin{claim}\label{cl:complete}
$N_1'$ is complete to $N_1$.
\end{claim}
\begin{cproof}
Recall that $t=|N_1|$. The case of $t=1$ holds trivially. So assume that $t\geq2$ and the claim is false, that is,  there exist $u_k \in N_1$, $u_l' \in N_1'$  such that $u_ku_l' \notin E(G)$.
If $u_k'u_l'\notin E(G)$,  then $G[\{u_k,u_k',u_l'\}]=\overline{P_3}$ and $G \setminus \{u_k,u_k',u_l'\}$ has a perfect matching $(M_v\setminus \{u_ku_k',u_lu_l'\} ) \cup \{vu_l\} $, a contradiction by Lemma \ref{lem: defn edg-hered equim}. So $u_k'u_l'\in E(G)$ and therefore, $N_1'$  is not an independent set. We remark that in this case $t \geq 3$ since $\vert V(G) \vert=2r+1$ for $r\geq 3$ and $N_2=\emptyset$ by Claim \ref{cl:N2 empty}. The followings hold by Lemma \ref{lem: defn edg-hered equim}.
\begin{itemize}
\item $u_ku_s'\in E(G)$ for all $s\in[t]\setminus\{k,l\}$, since otherwise $G[\{u_k,u_s,u_s'\}]=\overline{P_3}$ for some $s\in[t]\setminus\{k,l\}$ and $G \setminus \{u_k,u_s,u_s'\}$ has a perfect matching $ (M_v\setminus \{u_ku_k',u_lu_l',u_su_s'\} ) \cup \{vu_l,u_k'u_l'\} $. Also, by symmetry, $u_lu_s'\in E(G)$ for all $s\in[t]\setminus\{k,l\}$.
\item $u_su_k'\in E(G)$ for all $s\in[t]\setminus\{k,l\}$, otherwise $G[\{v,u_s,u_k'\}]=\overline{P_3}$  for some $s\in[t]\setminus\{k,l\}$ and $G \setminus \{v,u_s,u_k'\}$ has a perfect matching $ (M_v\setminus \{u_ku_k',u_su_s'\} ) \cup \{u_ku_s'\} $. Also, by symmetry, $u_su_l'\in E(G)$ for all $s\in[t]\setminus\{k,l\}$.
\end{itemize}

Now $G[\{v,u_k,u_l'\}]=\overline{P_3}$  and $G \setminus \{v,u_k,u_l'\}$ has a perfect matching $ (M_v\setminus \{u_ku_k',u_su_s'\} ) \cup \{u_su_k',u_lu_s'\} $ for some $s\in[t]\setminus\{k,l\}$, a contradiction by Lemma \ref{lem: defn edg-hered equim}. It follows that $u_ku_l' \in E(G)$ for every pair $k,l \in [t]$, which completes the proof of the claim.
\end{cproof}\\

In the remaining of the proof, we consider two exclusive cases: $N_1'$ is independent or not. 

\begin{case}
 $N_1'$ is not independent.
\end{case}

We then claim that $N_1=\{u_1,u_2,\ldots,u_t\}$ is a nontrivial independent set which is complete to $G\sm N_1$. Indeed, we have $N_2 = \emptyset$  by Claim \ref{cl:N2 empty} and consequently, $t\geq 3$ since $G$ has at least 7 vertices. Thus, $N_1$ is an independent set complete to $N_1'$ by Claim   \ref{cl:complete} and complete to $\{v\}$ by definition. So, $N_1$ is complete to $N_1'\cup\{v\}=G\sm N_1$.

\begin{case}
 $N_1'$ is independent.
\end{case}

Recall that $N_1'$ is complete to $N_1$ by Claim \ref{cl:complete}. It can be observed that $N_2 \neq \emptyset$ since otherwise $G=K_{r,r+1}$ which is not factor-critical. If $N_2$ is complete to $N_1'$, then $S=N_1' \cup \{v\}$ is an independent set complete to $G \sm S$ as claimed. So, assume that $N_2$ is not complete to $N_1'$, that is, there exist $w \in N_2$ and $u_k' \in N_1'$ for $k \in [t]$ such that $wu_k' \notin E(G)$. Let without loss of generality $w=x_1$, that is $x_1u_k' \notin E(G)$.

Note that, for every $k \in [t]$,  $u_k'$ is adjacent to one of the endpoints of each edge in $\{x_1x_1',x_2x_2',\ldots,x_px_p'\}$; indeed if $\{x_i,x_i'\} \cap N(u_k')=\emptyset $ for some $i\in [p]$, then $G[\{u_k',x_i,x_i'\}]=\overline{P_3} $ and $G\setminus \{u_k',x_i,x_i'\} $ has a perfect matching $ (M_v\setminus \{x_ix_i',u_ku_k'\} ) \cup \{vu_k\} $, a contradiction. Let without loss of generality $ \{x_1',x_2',\ldots,x_p'\}$ be the neighbours of $u_k'$. We remark that $x_1$ must be adjacent to each vertex of $N_1$, otherwise, let $x_1u_l\notin E(G)$ for $l \in [t]$, then $G[\{x_1,u_l,u_k'\}]=\overline{P_3}$  and $G\setminus \{x_1,u_l,u_k'\} $ has a perfect matching as $( M_v\setminus \{x_1x_1',u_ku_k',u_lu_l'\} ) \cup \{vx_1',u_ku_l'\} $, a contradiction. Similarly, if there is a vertex $x_i$ for $i \in [p]$ such that $x_iu_k'\notin E(G)$, then $x_i$ is adjacent to each vertex of $N_1$.  \\
Now, we claim that $x_1,x_2,\ldots,x_p \notin N(u_k')$. Otherwise, let $x_iu_k' \in E(G)$ for $i \neq 1$. If there is a perfect matching $P$ in $G[\{u_k,x_1',x_i,x_i'\}]$, then consider the $\overline{P_3}$ induced by $\{v,x_1,u_k'\}$ and note that $G\setminus \{v,x_1,u_k'\} $ contains the perfect matching $( M_v\setminus \{x_1x_1',x_ix_i',u_ku_k'\} ) \cup P $, a contradiction. Assume $G[\{u_k,x_1',x_i,x_i'\}]$ has no perfect matching. Noting that $\nu(G[\{u_k,x_1',x_i,x_i'\}])=1$, this can only be the disjoint union of a star and (at most 2) isolated vertices, or a triangle and an isolated vertex, namely $K_{1,3},K_3 \cup K_1,P_3 \cup K_1,  P_2 \cup K_1 \cup K_1 $ and each of these graphs contains either $\overline{P_3}$ or an independent set $I$ of size $3$. For such $\overline{P_3}$'s and $I$'s, $G \setminus \overline{P_3}$ or $G\setminus I$ has a perfect matching  $( M_v\setminus \{x_1x_1',x_ix_i',u_ku_k'\} ) \cup \{vx_1,wu_k'\} $ where $w$ is the vertex remaining from $\{ u_k,x_1',x_i,x_i'\}$ after removing $\overline{P_3}$ or $I$ (note that $u_k'$ is adjacent to each of $\{ u_k,x_1',x_i,x_i'\}$). This contradicts being $\ESE$ or equimatchable.\\
Remind that for every $j \in [p]$, $x_j$ is complete to $N_1$. Moreover, we now claim that for every $j \in [p]$, $x_j$ must be complete to $N(u_k')$ (recall that $x_ju_k' \notin E(G)$), otherwise let $x_jx_i'\notin E(G)$, then $G[\{x_j,x_i',u_k'\}]=\overline{P_3}$ and   $G\setminus \{x_j,x_i',u_k'\} $ has a perfect matching $ ( M_v\setminus \{x_jx_j',x_ix_i',u_ku_k'\} ) \cup \{vx_j',x_iu_k\} $, a contradiction with $G$ being $\ESE$. So  $x_j$ must be complete to $N(u_k')\setminus N_1$. As a result, for every $j \in [p]$, $x_j$ is adjacent to each vertex of $N(u_k')$.  \\
Besides, for every $j \in [p]$, $x_j'$ has no neighbour in $ N_1$ since otherwise let $x_j'u_l \in E(G)$ then for some $k\in [p]$, $G[\{v,x_j,u_k'\}]$ induces a $\overline{P_3}$ such that  $G \setminus \{v,x_j,u_k'\}$ has a perfect matching $(M_v\setminus \{x_jx_j',u_lu_l',u_ku_k'\})  \cup \{u_lx_j',u_ku_l'\} $, a contradiction. \\  
Furthermore, for every $j \in [p]$, $x_j'$ is adjacent to each vertex of $ N_1'$ since otherwise let $x_j'u_l' \notin E(G)$ then $G[x_j',u_l',u_l]$ is a $\overline{P_3}$ such that  $G \setminus \{x_j',u_l',u_l\}$ has a perfect matching $(M_v\setminus \{x_jx_j',u_lu_l'\})  \cup \{vx_j\} $.  \\
Now, we will show that any two $x_i',x_j'$ can not be adjacent for $i,j \in [p]$. Assume the contrary, let $x_i'x_j' \in E(G) $ then $G[\{v,x_i,u_k'\}] \cong \overline{P_3}$ and  $G\setminus \{v,x_i,u_k'\}$ has a perfect matching  $ (M_v\setminus \{x_ix_i',x_jx_j',u_ku_k'\})  \cup \{u_kx_j,x_i'x_j'\} $, it gives a contradiction with being $\ESE$-graph. Hence, $\{u_1,u_2,\ldots,u_t\}\cup\{x_1',x_2',\ldots,x_p'\}$ is an independent set. On the other hand, $x_i$ is complete to $G \sm (N_1\cup N_2)$ for all $i \in [p]$. Hence, $S=\{u_1,u_2,\ldots,u_t\}\cup\{x_1',x_2',\ldots,x_p'\}$ is an independent set which is complete to $G\sm S$ as desired.
\end{proof}

For later purpose, we need to show that there is a nontrivial independent set $S$ complete to $G\sm S$ and having a special form with respect to a canonical decomposition of $G$.  

\begin{cor}\label{cor: structure of S clique N1' cup v}
Let $G$ be a factor-critical graph with at least 7 vertices. If $G$ is an $\ESE$-graph which is not an odd clique, then for some $v\in V(G)$ and some perfect matching $M_v$ of $G-v$, $G$ has a canonical decomposition $(N_1,N'_1,N_2)$ where $S=N_1' \cup \{v\}$ is a nontrivial independent set which is complete to $G\sm S=N_1\cup N_2$, and $N_1$ is an independent set.
\end{cor}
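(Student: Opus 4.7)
The plan is to leverage Lemma \ref{lem: exist a clique S} to obtain a nontrivial independent set $S$ complete to $G\sm S$, and then to show that by choosing $v$ inside $S$ and building any perfect matching of $G-v$, the resulting decomposition of Figure \ref{fig:minimal matching M isolating v} automatically satisfies $S=N_1'\cup\{v\}$. In other words, rather than reading off $S$ from the decomposition, I run the decomposition backwards starting from $S$.

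First I would pick any $v\in S$. Since $S$ is nontrivial, $v$ is non-adjacent to at least one vertex (any other vertex of $S$), so $v$ is non-universal. Moreover, because $S$ is independent and $S$ is complete to $G\sm S$, the set of non-neighbours of $v$ in $G$ is exactly $S\sm\{v\}$. Since $G$ is factor-critical, $G-v$ has a perfect matching $M$, and I can use this $M$ to define the sets $N_1,N_1',N_2$ as in the lemma. Each vertex of $S\sm\{v\}$ is a non-neighbour of $v$ and must be matched by $M$, and because $S$ is independent it can only be matched to a vertex of $G\sm S$, i.e.\ to a neighbour of $v$. Hence $N_1'\supseteq S\sm\{v\}$; conversely, $N_1'$ consists of non-neighbours of $v$, and all non-neighbours of $v$ lie in $S\sm\{v\}$, so $N_1'=S\sm\{v\}$ and therefore $S=N_1'\cup\{v\}$.

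Next I would verify the remaining requirements of the decomposition. The equality $V(M)=N_1\cup N_1'\cup N_2$ holds because a missing vertex would be a non-neighbour of $v$ matched to another non-neighbour of $v$, but all non-neighbours of $v$ lie in the independent set $S\sm\{v\}$, which rules this out. The independence of $N_1$ follows verbatim from the argument appearing in the proof of Lemma \ref{lem: exist a clique S}: if $u_i,u_j\in N_1$ were adjacent, then according to whether $u_i'u_j'$ is an edge, the triple $\{v,u_i',u_j'\}$ would induce either an independent set of size $3$ or a $\overline{P_3}$ whose removal leaves the perfect matching $(M\sm\{u_iu_i',u_ju_j'\})\cup\{u_iu_j\}$, contradicting Lemma \ref{lem: defn equim} or Lemma \ref{lem: defn edg-hered equim}; this argument uses only that $v$ is non-adjacent to $u_i',u_j'$, which is automatic from $u_i',u_j'\in N_1'$. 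Finally, $S$ is complete to $G\sm S$ by the choice of $S$ given by Lemma \ref{lem: exist a clique S}.

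There is no real obstacle here: the content of the corollary is that the independent set produced by Lemma \ref{lem: exist a clique S} (which in the proof of that lemma took several different forms depending on whether $N_1'$ was independent and whether certain edges existed) can always be normalised into the canonical form $N_1'\cup\{v\}$ by choosing $v\in S$ and constructing the decomposition afresh. The only mild observation needed is that $|N_1'|\ge 1$ (so the decomposition is nontrivial), which follows from $|S|\ge 2$.
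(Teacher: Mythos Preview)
Your proof is correct and follows essentially the same approach as the paper: apply Lemma \ref{lem: exist a clique S} to obtain $S$, pick $v\in S$, take a perfect matching $M$ of $G-v$, and observe that $M$ must match each vertex of $S\sm\{v\}$ (which are exactly the non-neighbours of $v$) into $G\sm S$, so that $N_1'=S\sm\{v\}$. The paper's argument for the independence of $N_1$ is marginally more economical---since $N_1'=S\sm\{v\}\subseteq S$ is already known to be independent, the triple $\{v,u_i',u_j'\}$ is automatically an independent set and the $\overline{P_3}$ case you mention never actually arises---but your version citing the argument from Lemma \ref{lem: exist a clique S} is equally valid.
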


\begin{proof}
By Lemma \ref{lem: exist a clique S}, there is a nontrivial independent set $S$ complete to $G\sm S$.  Taking any vertex $v\in S$, since $G$ is factor-critical, $G-v$ has a perfect matching $M_v$. It is easy to see that $M_v$ matches the vertices of $S\sm v$ to a subset $S' \subset V(G\sm S)$, since $S$ is independent. Moreover, both $S\sm v$ and $S'$ are nonempty. We also remark that $S'$ is an independent set, since otherwise let  $y'z' \in G[S']$ and $yy',zz' \in M_v$ for $y,z \in S \sm v$ and $y',z' \in S'$, then $\{v,y,z\}$ is an independent set and  $( M_v\setminus \{yy',zz'\} ) \cup \{y'z'\} $ is a perfect matching on $G\sm \{v,y,z\}$, contradiction with equimatchability of $G$. \\
Now, one can observe that $G$ has a canonical decomposition where 
 $N_1'=S\sm v$, $N_1=S'$ and $N_2=G \sm (S\cup S')$. Note that, $N_1'$ is complete to $N_1\cup N_2$, and $N_1$ is an independent set.
\end{proof}

We define two graph families $\G_1$ and $\G_2$ corresponding to the cases where the nontrivial independent set $S$ described in Corollary \ref{cor: structure of S clique N1' cup v} has respectively two or more vertices and satisfying some additional properties due to being $\ESE$-graphs. We will show that all factor-critical $\ESE$-graphs (apart from odd cliques) fall into one of these two families. A graph $G$ belongs to $\G_1$ if  $G \cong K_{2r+1} \sm M$ for some nonempty matching $M$ and $r \geq 3$. A graph $G$ of $\G_1$ is illustrated in Figure \ref{fig: FC- ESE graph (a)} where the edges in $G{[}N_1 \cup N_2{]} \cong K_{2r-1} \sm (M \sm vu_1')$ are not drawn, and $S=\{v,u_1'\}$ is complete to $G \sm S$. Besides, $\G_2$ is defined as the family of graphs $G$ admitting an independent set $S$ of size at least $3$ which is complete to $G \sm S$ and such that  $\nu(G \setminus S)=1 $.  In Figure \ref{fig: FC- ESE graph (b)},  we show an illustration of a graph $G$ in $\G_2$ where $S=N_1' \cup \{v\}$ with $\vert S \vert \geq 3$ and $\nu(G \sm S)=1$. Again, the edges in $G{[}N_1 \cup N_2{]}$ are not drawn but just described by the property $\nu(G \sm S)=1$. \\

\begin{figure}[htb]
\centering     %%% not \center
\subfigure[Illustration of a graph $G$ in $\G_1$ with $S=\{v,u'_1\}$ and $G \cong K_{2r+1} \sm M$ where $M$ is a matching containing $vu_1'$.]{\label{fig: FC- ESE graph (a)}
%\subfigure[A graph in $\G_1$ with $G[N_1 \cup N_2] \cong K_{2r-1}(M \sm vu_1')$]{\label{fig: FC- ESE graph (a)}
\begin{tikzpicture}[scale=.7]
\node [noddee] at (2,1.5) (u1) [label=above:$u_1$]  {};
\node [noddee] at (2,-.5) (x2)  {};
\node [noddee] at (2,-1.5) (x22)  {}
	edge [ultra thick, red] (x2);
\node at (2,-1.85) (dot) {$\vdots$};
\node [noddee] at (2,-2.5) (xp)  {};
\node [noddee] at (2,-3.5) (xpp)  {}
	edge [ultra thick, red] (xp);
\node [noddee] at (0,0) (v) [label=left:$v$]  {}
	edge [] (u1)
	edge [] (x2)
	edge [] (x22)
	edge [] (xp)	
	edge [] (xpp);
\node [noddee] at (4,1.5) [label=above:$u_1'$] (u11)  {}
	edge [ultra thick, red] (u1)
	edge [] (x2)
	edge [] (x22)
	edge [] (xp)	
	edge [] (xpp);
\draw[thick] (2,1.5) ellipse (.7cm and 1cm);	
\draw[thick](4,1.5) ellipse (.7cm and 1cm);	
\draw[thick] (2,-2) ellipse (.8cm and 2cm) ;		
\node at (1.8,1.7) (N1) [label=above left:$N_1$] {};		
\node at (4.2,1.7) (N11) [label=above right:$N_1'$] {};			
\node at (2,-3.5) (N2) [label=below right:$N_2$] {};
\node at (-1,2) {};
\node at (6,2) {};
\end{tikzpicture}  
}
\hspace*{2cm}
\subfigure[Illustration of a graph $G$ in $\G_2$ with $S=N_1' \cup \{v\}$ and $\nu(G\sm S)=1$.]{\label{fig: FC- ESE graph (b)}
\begin{tikzpicture}[scale=.7]
\node [noddee] at (2,3) (u1)  {};
\node [noddee] at (2,2) (u2)  {};
\node at (2,1.3) (dot) {$\vdots$};
\node [noddee] at (2,.5) (ut)  {};
\node [noddee] at (2,-1) (x1)  {};
\node [noddee] at (2,-2) (x11)  {}
	edge [ultra thick, red] (x1);	
\node [noddee] at (0,0) (v) [label=left:$v$]  {}
	edge [] (u1)
	edge [] (u2)
	edge [] (ut)
	edge [] (x1)
	edge [] (x11);
\node [noddee] at (4,3) (u11)  {}
	edge [ultra thick, red] (u1)
	edge [] (u2)
	edge [] (ut)
	edge [] (x1)
	edge [] (x11);
\node [noddee] at (4,2) (u22)  {}
	edge [] (u1)
	edge [ultra thick, red] (u2)
	edge [] (ut)
	edge [] (x1)
	edge [] (x11);
\node at (4,1.3) (dot) {$\vdots$};
\node [noddee] at (4,.5) (utt)  {}
	edge [] (u1)
	edge [ultra thick, red] (ut)
	edge [] (u2)
	edge [] (x1)
	edge [] (x11);
\draw[thick] (2,1.8) ellipse (.8cm and 1.8cm);	
\draw[thick](4,1.8) ellipse (.8cm and 1.8cm);	
\draw[thick] (2,-1.4) ellipse (.75cm and 1cm);		
\node at (2,3) (N1) [label=above left:$N_1$] {};		
\node at (4,3) (N11) [label=above right:$N_1'$] {};			
\node at (2,-2) (N2) [label=below right:$N_2$] {};	
\node at (-1,2) {};
\node at (5.5,2) {};		
\end{tikzpicture} 
}
\caption{Factor-critical $\ESE$-graph families $\G_1$ and $\G_2$ where the bold (red) edges illustrate a perfect matching $M_v$ of $G\sm v$ which defines the canonical decomposition $(N_1, N'_1,N_2)$.}
\label{fig:FC- ESE graph}
\end{figure}
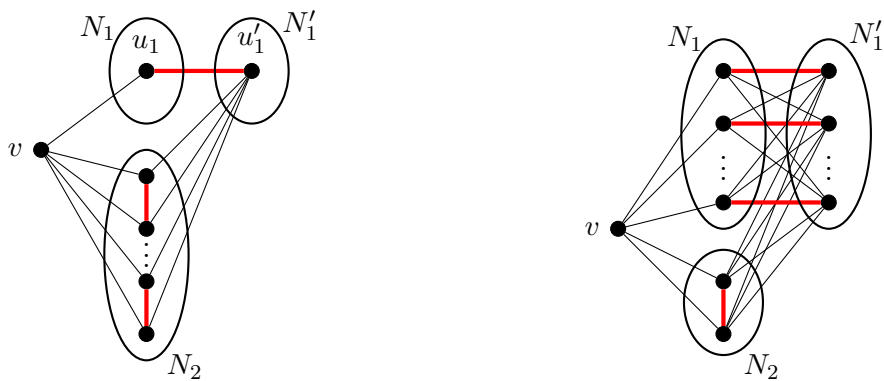

%%%%%%%%%%%%%%
\setcounter{claim}{0}

\begin{thm} \label{thm: chracterization of edge-stable}
Let $G$ be a factor-critical graph with at least $7$ vertices. Then,
$G$ is $\ESE$  if and only if  $G$ is either an odd clique or a graph in $\G_1 \cup \G_2$.
\end{thm}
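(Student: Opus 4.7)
The plan is to prove both directions of the characterization. For the forward direction, let $G$ be a factor-critical $\ESE$-graph with $|V(G)| \geq 7$ that is not an odd clique. I would invoke Corollary~\ref{cor: structure of S clique N1' cup v} to obtain a decomposition as in Figure~\ref{fig:minimal matching M isolating v}, where $S = N_1' \cup \{v\}$ is a nontrivial independent set complete to $G \sm S$ and $N_1$ is independent, and then split on $|S|$. If $|S| = 2$, then $t=1$ and all non-edges of $G$ lie inside $N_1 \cup N_2$; I would prove that these non-edges form a matching by showing that no vertex $w \in N_1 \cup N_2$ can have two non-neighbors $w_1, w_2 \in N_1 \cup N_2$. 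Indeed, the triple $\{w, w_1, w_2\}$ would then be an induced $\overline{P_3}$ (if $w_1 w_2 \in E$) or an independent triple; in each subcase the matching edges of $N_2$ together with the universal vertices $v, u_1'$ yield a perfect matching of $G \sm \{w, w_1, w_2\}$, contradicting Lemma~\ref{lem: defn equim} or Lemma~\ref{lem: defn edg-hered equim}. Nonemptiness of the resulting matching of non-edges comes from $G \not\cong K_{2r+1}$, placing $G$ in $\G_1$. If $|S| \geq 3$, i.e., $t \geq 2$, I would show $\nu(G \sm S)=1$: assuming two disjoint edges $e_1, e_2$ exist in $G \sm S$, I would remove the independent triple $\{v, u_1', u_2'\} \subseteq S$ and construct a perfect matching of $G \sm \{v, u_1', u_2'\}$ by combining $e_1, e_2$ with the surviving edges of the matching $M$ of $G - v$, universal edges from $u_3', \ldots, u_t'$ to $N_1$, and an internal pairing of the leftover $G \sm S$-vertices. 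A short subcase analysis based on whether $e_1, e_2$ sit entirely inside $N_2$, entirely across $N_1$--$N_2$, or mixed shows that completeness always lets one close the matching, contradicting Lemma~\ref{lem: defn equim}. Hence $\nu(G \sm S)=1$ and $G \in \G_2$.

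For the reverse direction I would verify each family is $\ESE$ via Lemmas~\ref{lem: defn equim} and \ref{lem: defn edg-hered equim}. Odd cliques contain no independent triple and no induced $\overline{P_3}$, and deleting a single edge introduces neither, so both lemma conditions are trivial. For $G \in \G_1$, the complement $\overline{G}$ is a matching plus isolated vertices, so $G$ has no independent triple and no induced $\overline{P_3}$; this is preserved after any edge deletion, and factor-criticality follows from the standard fact that $K_{2r} \sm M'$ has a perfect matching for any matching $M'$. For $G \in \G_2$, factor-criticality is verified directly from the edge of $G \sm S$ and the completeness of $S$ to $G \sm S$; then the two lemmas apply because the removal of any independent triple or induced $\overline{P_3}$ leaves at most one edge usable inside $G \sm S$ (by $\nu(G \sm S)=1$), which combined with the bipartite-like completeness between $S$ and $G \sm S$ forbids a perfect matching of the remainder.

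The main obstacle I anticipate is the $|S|=2$ subcase of the forward direction: promoting the pointwise $\ESE$ constraint into the global statement that all non-edges of $G$ form a matching demands, for each candidate triple $\{w, w_1, w_2\}$, a careful subcase split on whether $w_1 w_2 \in E$ together with an explicit construction of a perfect matching of $G \sm \{w, w_1, w_2\}$ that routes around the matching edges $x_j x_j'$ in $N_2$ while avoiding the hypothetical non-edges.
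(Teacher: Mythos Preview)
Your overall architecture matches the paper's: invoke Corollary~\ref{cor: structure of S clique N1' cup v}, split on $|S|$, and handle the reverse direction via Lemmas~\ref{lem: defn equim} and~\ref{lem: defn edg-hered equim}. But the $|S|\geq 3$ branch of the forward direction has a genuine gap.

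You assert that from $\nu(G\sm S)\geq 2$ and completeness alone one can assemble a perfect matching of $G\sm\{v,u_1',u_2'\}$ out of two disjoint edges $e_1,e_2$ of $G\sm S$, surviving $M$-edges, and edges from $u_3',\dots,u_t'$. This fails as soon as $p\geq 2$. Take $e_1=x_1x_1'$ and $e_2=x_2x_2'$ (always available when $p\geq 2$); the surviving $M$-edges then cover $u_3u_3',\dots,u_tu_t'$ and $x_3x_3',\dots,x_px_p'$, leaving precisely $u_1,u_2\in N_1$ uncovered. Since $N_1$ is independent and every vertex of $S$ is already gone or matched, the matching cannot be closed. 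Your subcase split on where $e_1,e_2$ sit does not rescue this, because $\nu(G\sm S)\geq 2$ is automatic once $p\geq 2$, even when every $u_i$ is isolated in $G\sm S$. The paper closes this by first proving, as a separate Claim and using the $\ESE$ hypothesis through Lemma~\ref{lem: defn edg-hered equim}, that $|N_2|=2$; only once $p=1$ is in hand does any size-2 matching of $G\sm S$ necessarily occupy two $N_1$-vertices and both $N_2$-vertices, after which the perfect-matching contradiction you describe goes through. That Claim is not a bookkeeping detail you can absorb into a ``short subcase analysis''; it is where the $\ESE$ assumption is actually spent in this branch.

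A related subtlety appears in the $|S|=2$ branch: for some triples $\{w,w_1,w_2\}$ (the paper's case~(d), where $\{x_1,x_2,x_3\}$ and $\{x_1',x_2',x_3'\}$ are both independent and $u_1$ sees none of $x_1',x_2',x_3'$) there is no perfect matching of $G\sm\{w,w_1,w_2\}$ at all, and the paper must switch to a different triple to obtain the contradiction. Your formulation (``no vertex $w$ has two non-neighbours, else remove $\{w,w_1,w_2\}$'') commits you to the given triple and does not account for this.

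One small error in your reverse direction for $\G_1$: it is \emph{not} true that ``no induced $\overline{P_3}$'' is preserved under edge deletion --- deleting $uv$ creates an induced $\overline{P_3}$ on $\{u,v,w\}$ whenever $vw\in M$. What \emph{is} preserved is ``no independent triple'', and that by itself already forces every maximal matching of $G\sm e$ to have size $r$; the $\overline{P_3}$-freeness is only needed in $G$ (for Lemma~\ref{lem: defn edg-hered equim}), not in $G\sm e$.
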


\begin{proof}
Let $G$ be a factor-critical graph with at least $7$ vertices. It is clear that $G \cong K_{2r+1}$ is an $\ESE$-graph for $r\geq 3$. Assume that $G$ belongs to $\G_1$ or $\G_2$ and we will show that $G$ is again $\ESE$. First, let $G$ be in $\G_1$, that is $G \cong K_{2r+1} \setminus M$ where $M$ is a nonempty matching, then every maximal matching of $G$ has $r$ edges (indeed $G$ has no independent set of size 3, hence, no maximal matching of size $r-1$), thus $G$ is equimatchable. In addition, $G$ is $\overline{P_3}$-free. Hence, $G$ is $\ESE$ by Lemma \ref{lem: defn edg-hered equim}. 

Now, let $G$ be in $\G_2$, then by definition of $\G_2 $, we have $\vert S \vert\geq 3$ and  $G \setminus S $ induces a graph whose matching number is equal to 1. Consider a vertex $v\in S$. Since $G$ is factor-critical, $G- v$ has a perfect matching where all vertices in $S$ are necessarily matched to some vertex in $G \setminus S$. If this perfect matching contains an edge in $G\setminus S$ then $|S|=|V(G\setminus S)|-1$, otherwise $|S|=|V(G\setminus S)|+1$. Applying the same argument to a vertex $v'\in G\setminus S$, we have either $|S|=|V(G\setminus S)|-3$ (in case a perfect matching of $G-v'$ contains an edge in $G\setminus S$), or $|S|=|V(G\setminus S)|-1$ (otherwise). It follows that the only possible case is $\vert V(G\setminus S) \vert=\vert S \vert +1$.  

 We claim that $G$ is equimatchable. Since $S$ is complete to $G \setminus S$,  we have either $I \subseteq S$ or $I \subseteq G \setminus S$ for any independent set $I$ with $3$ vertices. If $I \subseteq S$, then $ G  \setminus I$ has no perfect matching; indeed all vertices of $S \setminus I $ have to be matched to vertices in $G \setminus S$, leaving $4$ vertices in $G \setminus S$ exposed (because $\vert I \vert=3$) that can not be all saturated since $\nu(G \setminus S)=1$. Otherwise, if $I \subseteq G \sm S$, then $G \sm I$ has no perfect matching because there are $\vert S \vert$ vertices that can not be matched to remaining $\vert S \vert -2$  vertices of $(G \sm S) \sm I$. Hence, by Lemma \ref{lem: defn equim}, $G$ is equimatchable. Now, we claim that $G$ is edge-stable. Let $  R \subset V(G)$ such that $G[R]\cong \overline{P_3}$.  It then follows that $  R \subseteq V(G \sm S)$ since every vertex of $S$ is adjacent to all vertices in $G \sm S$. It can be observed that there is no perfect matching in $G \sm R$ because there are $\vert S \vert$ vertices that can not be matched to remaining $\vert S \vert -2$  vertices of $(G \sm S) \sm R$.  Hence, by Lemma \ref{lem: defn edg-hered equim}, $G$ is edge-stable.\\

%%%% Anymore  \sm equal to \setminus !!!
\noindent
Now, let us show the converse. Assume that $G$ is a factor-critical $\ESE$-graph with $\vert V(G) \vert=2r+1$ for $r\geq 3$. We will show that if $G$ is not an odd clique then it is either in $\G_1$ or in $\G_2$. By Corollary \ref{cor: structure of S clique N1' cup v}, for some $v\in V(G)$ and some perfect matching $M_v$ of $G-v$, $G$ has a canonical decomposition $(N_1,N'_1,N_2)$ where $S=N_1' \cup \{v\}= \{v,u_1',u_2',\ldots,u_t'\}$ is a nontrivial independent set which is complete to $G\sm S$, and $N_1$ is a nonempty independent set (see Figure \ref{fig:minimal matching M_v isolating v}). \\
First, let $\vert S \vert=2$, then $\vert N_1 \vert=1$. We note that in this case $p\geq 2$ since $2r+1 \geq 7$. We will show that $G\cong K_{2r+1}\setminus M$, where $M$ is a matching of $K_{2r+1}$ containing $vu'_1$.
We suppose that it is not true, then, since  $G\not \cong K_{2r+1}$, $G$ is obtained from $K_{2r+1}$ by removing an edge set which do not form a matching. This means in particular that there is a vertex $w_1$ with at least two non-neighbors $w_2$ and $w_3$. In addition, we know by Corollary \ref{cor: structure of S clique N1' cup v} that $\{v,u_1'\}$ is complete to $N_1 \cup N_2$, implying that the missing edges are between vertices in $N_1\cup N_2$. It follows that there exists $w_1,w_2,w_3 \in N_1 \cup N_2$ such that $G[\{w_1,w_2,w_3\}] \cong \overline{P_3}$ or an independent set of size 3.   %Firstly, let $G[\{w_1,w_2,w_3\}] \cong \overline{P_3}$ . In this case, $G \setminus \{w_1,w_2,w_3\}$ has a perfect matching $P$ as following:
Noting that $S$ is complete to $N_1\cup N_2$ and that without loss of generality,  $x_1,x_1'$ can be considered for some $x_i,x_i'$ pair, the following cases cover all possibilities for $w_1,w_2,w_3$.
\begin{itemize}
\item[$(a)$] If $w_1=u_1 \in N_1$, $w_2=x_1 \in N_2$, $w_3=x_1' \in N_2$, then $P=\{vx_2,x_2'u_1',x_3x_3',\ldots x_px_p'\}$ is a perfect matching in $G \setminus \{w_1,w_2,w_3\}$.
%If $w_1=u_1 \in N_1$, $w_2=x_1 \in N_2$, $w_3=x_1' \in N_2$, then $\exists \ x_2, x_2'$ since $2r+1 \geq 7$ such that $P=\{vx_1,x_1'u_1',x_2x_2',\ldots,x_{i-1}x_{i-1}',x_{i+1}x_{i+1}',\ldots,x_px_p'\}$ is a perfect matching in $G \setminus \{w_1,w_2,w_3\}$.
\item[$(b)$] If $w_1=u_1 \in N_1$, $w_2=x_1 \in N_2$, $w_3=x_2 \in N_2$, then $P=\{vx_1',x_2'u_1',x_3x_3',\ldots,x_px_p'\}$ is a perfect matching in $G \setminus \{w_1,w_2,w_3\}$.

%\end{itemize}
%For the another case, let $G[\{w_1,w_2,w_3\}] \cong \overline{K_3}$. In this situation, $G \setminus \{w_1,w_2,w_3\}$ has again a perfect matching $P$ as following:
%\begin{itemize}
\item[$(c)$]If $w_1=x_1$, $w_2=x_2$, $w_3=x_2'$, then  $P=\{u_1u_1',vx_1',x_3x_3',\ldots,x_px_p'\}$ is a perfect matching in $G \setminus \{w_1,w_2,w_3\}$.
\item[$(d)$] If $w_1=x_1$, $w_2=x_2$, $w_3=x_3$, then there are two cases. If there is an edge $e \in E(G[\{x_1',x_2',x_3'\}])$, say without loss of generality $e=x_1'x_2'$, then  $P=\{u_1u_1',e=x_1'x_2',vx_3',x_4x_4',\ldots,x_px_p'\}$ is a perfect matching in $G \setminus \{w_1,w_2,w_3\}$.
Assume now that $G[\{x_1',x_2',x_3'\}]$ is a null graph. If one of the edges $u_1x_1'$, $u_1x_2'$ or $u_1x_3'$ exists, say without loss of generality $u_1x_1'$, then $G \setminus \{w_1,w_2,w_3\}$ has a perfect matching $P=\{vx_2',u_1x_1',u_1'x_3',x_4x_4',\ldots,x_px_p'\}$. Otherwise, we can conclude in exactly the same manner as in Case $(b)$ by considering $\{u_1,x_1,x_2\}$ as $ \{w_1,w_2,w_3\}$.
% is another independent set and  $G \setminus \{u_1,x_1,x_2\}$ has a perfect matching $P=\{vx_1',x_2'u_1',x_3x_3',\ldots,x_px_p'\}$. 
\end{itemize} 
In all these cases, we conclude by Lemmas \ref{lem: defn equim} and \ref{lem: defn edg-hered equim} that there is a contradiction, hence, $G \cong K_{2r+1}\setminus M$ where $M$ is a matching of $K_{2r+1}$ containing $vu'_1$ and consequently $G$ belongs to $\G_1$ (see Figure \ref{fig: FC- ESE graph (a)}).\medskip

\noindent
Assume now $\vert S \vert \geq 3$. We need the following claim to show that $G[N_1 \cup N_2]$ induces a graph whose matching number is equal to 1. 

\begin{claim}
$\vert N_2 \vert =2$
\end{claim}
\begin{cproof}
First, $\vert N_2 \vert\neq 0$, since otherwise for $w \in N_1$, $G- w$ has no perfect matching (note that $S=\{v\}\cup N_1'$ is an independent set of $G-v$ of cardinality two more than $(G-v)\sm S$), hence, $G$ is not factor-critical. Now, assume for a contradiction that $|N_2|>2$, that is $p > 1$. Note that $\vert N_1 \vert =\vert N_1' \vert =t  \geq 2$ due to $\vert S \vert \geq 3 $, let $u_i',u_j' \in N_1'$. First we observe that no vertex in $N_2$ forms an independent set with $\{u_i,u_j\}$. Assume for a contradiction that there exists a vertex $w \in N_2$, say without loss of generality $w=x_1$ such that $\{u_i,u_j\} \cap N(x_1)=\emptyset$, that is $\{u_i,u_j,x_1\}$ is an independent set in $G$, then we can obtain a perfect matching $ (M_v\setminus \{x_1x_1',x_2x_2',u_iu_i',u_ju_j'\})  \cup \{vx_1',u_i'x_2,u_j'x_2'\} $, a contradiction with being equimatchable. It follows that for all $w \in N_2$, $N(w) \cap \{u_i,u_j\} \neq \emptyset$. If there exists $u_i \in N_1$ such that $u_i$ is not adjacent to vertices $\{x_k,x_k'\}$, say without loss of generality $k=1$, then $\{u_i,x_1,x_1'\}$ induces a $\overline{P_3}$ and $G\setminus \{u_i,x_1,x_1'\}$ has a perfect matching $ (M_v\setminus \{x_1x_1',x_2x_2',u_iu_i',u_ju_j'\})  \cup \{vu_j,u_i'x_2,u_j'x_2'\} $, contradiction with being edge-stable. 
Consequently, for any pair $u_i,u_j \in N_1$ and $ x_k,x_k' \in N_2$, the graph induced by $\{ u_i,u_j,x_k,x_k'\}$ contains a perfect matching $P$ since the only graphs on $4$ vertices with matching number $1$ are $K_{1,3}$, $K_3 \cup K_1$ and $K_2 \cup K_1 \cup K_1$, and 
$G[\{u_i,u_j,x_k,x_k'\}]$ induces none of them by the above properties. Now, we notice that  $\{v,u_i',u_j'\}$ is an independent set and $G\setminus \{v,u_i',u_j'\}$ has a perfect matching $ (M_v\setminus \{x_kx_k',u_iu_i',u_ju_j'\})  \cup P $.  This is a contradiction with being equimatchable. So, $\vert N_2 \vert=2$. 
\end{cproof}\\

\noindent
Let $N_2=\{x_1,x_1'\}$. Now, if $\nu(G \setminus S) \geq 2$, then there are $4$ vertices $u_i,u_j,x_1,x_1'$ such that $G[\{u_i,u_j,x_1,x_1'\}]$ has a perfect matching $P$ (remind that $u_iu_j \notin E(G)$ since $N_1$ is an independent set); now $(M_v\setminus \{x_1x_1',u_iu_i',u_ju_j'\})  \cup P $ is a maximal matching in $G$ of size $(|V(G)|-3)/2$, contradiction with equimatchability (see Figure \ref{fig: FC- ESE graph (b)}). Hence, $G$ belongs to $\G_2$. This completes the proof.
\end{proof}

%\caption{The structure of factor-critical ESE-graphs \\ \textbf{a:} $K_{2r+1} \sm M$ with $M\neq \emptyset$, $S=N_1' \cup\{v\}$ and\\  $G[N_1\cup N_2]\cong K_{2r-1}\setminus (M\setminus vu_1')$\\   \textbf{b:} $S=N_1'\cup\{v\} $ with $|S|\geq 3$ and \\ $\nu(G \sm S)=1$ }

\begin{cor}\label{cor: finitely many FC ESE}
For every $r\geq 3$, there are exactly $2r+2$ factor-critical $\ESE$-graphs on $2r+1$ vertices.
\end{cor}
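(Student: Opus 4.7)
The plan is to leverage the classification from Theorem \ref{thm: chracterization of edge-stable}, which asserts that a factor-critical $\ESE$-graph on $2r+1\geq 7$ vertices is either $K_{2r+1}$, a member of $\G_1$, or a member of $\G_2$, and then to count the isomorphism classes in each family. As a preliminary, I would confirm these three classes are pairwise disjoint by comparing independence numbers: $K_{2r+1}$ has independence number $1$; any graph in $\G_1$ is $K_{2r+1}$ minus a nonempty matching, so its independence number is exactly $2$ (three pairwise non-adjacent vertices would require three edges forming a triangle inside the matching); and any graph in $\G_2$ admits by definition an independent set of size at least $3$.

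Counting $\G_1$ is immediate: up to isomorphism, $K_{2r+1}\sm M$ depends only on $|M|\in\{1,\ldots,r\}$, yielding exactly $r$ graphs. For $\G_2$, I would extract from the proof of Theorem \ref{thm: chracterization of edge-stable} the identity $|V(G\sm S)|=|S|+1$, which forces $|S|=r$ and $|V(G\sm S)|=r+1$. Since $S$ is an independent set of fixed size $r$ joined completely to $V(G\sm S)$, the isomorphism type of $G$ is entirely determined by that of the induced subgraph $H:=G[V(G\sm S)]$, a graph on $r+1$ vertices with $\nu(H)=1$.

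The core combinatorial step is then counting such $H$: the condition $\nu(H)=1$ says $H$ has at least one edge and any two edges meet, so either all edges share a common vertex---forcing $H$ to be a star $K_{1,k}$ together with $r-k$ isolated vertices for some $k\in\{1,\ldots,r\}$---or they do not, which forces $H$ to be a triangle together with $r-2$ isolated vertices. This yields $r+1$ graphs. Summing gives $1+r+(r+1)=2r+2$.

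The main delicate point is that the assignment $H\mapsto G$ must be injective on isomorphism classes. The only potential collision on the $H$-side is between $K_{1,3}$ and $K_3$ (both with three edges), and I would separate them via degree sequences in $G$: the star contributes a unique vertex of degree $r+3$ (the center of $K_{1,3}$ joined to all $r$ vertices of $S$), while the triangle case produces no vertex of degree exceeding $r+2$. Different stars $K_{1,k}$ already give different edge totals $|E(G)|=r(r+1)+k$ in $G$, which rules out their pairwise isomorphism and completes the proof.
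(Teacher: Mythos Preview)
Your proposal is correct and follows essentially the same approach as the paper: invoke Theorem~\ref{thm: chracterization of edge-stable}, count $r$ graphs in $\G_1$ by matching size, and count $r+1$ graphs in $\G_2$ by enumerating the possible subgraphs $H=G[V(G\sm S)]$ with $\nu(H)=1$ (namely $r$ stars and one triangle). You are in fact more careful than the paper in two places---verifying disjointness of the three families via independence number, and checking that distinct $H$'s yield non-isomorphic $G$'s via degree sequences---both of which the paper leaves implicit.
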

\begin{proof}
A nonempty matching of $K_{2r+1}$ has size between $1$ and $r$, implying that there are $r$ non-isomorphic graphs of family $\G_1$  in Theorem \ref{thm: chracterization of edge-stable}, that is  isomorphic to $K_{2r+1}\sm M$ for some nonempty matching $M$. If $G$ is of family $\G_2$ (see Figure \ref{fig: FC- ESE graph (b)}) then,  $\nu(G \sm S)=1$ implies that $G[N_1\cup N_2]$ can only be a disjoint union of isolated vertices and one triangle or one star where the edge $x_1x_1'$ belongs to this unique triangle or star. Since all vertices of $N_1\cup N_2$ are symmetric with respect to their neighborhoods outside of $N_1\cup N_2$, there are exactly $r$ ways of forming a star (note that $N_1$ has $(2r+1-3)/2 = r-1$ vertices) and just one way to form a triangle. Summing up all possibilities together being $G \cong K_{2r+1}$, there are in total $2r+2$ factor-critical $\ESE$-graphs.
\end{proof}

Next, we determined all factor-critical $\ESE$-graphs whose orders are at most $5$ by using computer programs written in Python-Sage. 

\begin{rem}\label{rem: small FC ESE}
There are only $6$ factor-critical $\ESE$-graphs on at most $5$ vertices:  $K_3$, $K_5$, $C_5$ and the three graphs in Figure \ref{fig: some ESE-graphs size of 5 }.
%where $M$ is a matching for $\vert M \vert=0,1,2$ and the following graphs  
% in which some of them ($K_{1,4}$, $P_3$, $K_2$)  is not factor-critical, the others, that is factor-critical ESE-graphs are:
\end{rem}

\begin{figure}[htb]
\centering     %%% not \center
\subfigure{\label{fig:a}
\begin{tikzpicture}[scale=.8]
\node [noddee] at (-.25,0) (x1)  {};
\node [noddee] at (-1,1.1) (x2)  {}
	edge [] (x1);
\node [noddee] at (.5,2) (x3)  {}
	edge [] (x1)
	edge [] (x2);
\node [noddee] at (2,1.1) (x4)  {}
	edge [] (x1)	
	edge [] (x2);
\node [noddee] at (1.25,0) (x5)  {}
	edge [] (x1)
	edge [] (x2)
	edge [] (x3)
	edge [] (x4);
\end{tikzpicture} 
}
\hspace*{1cm}
\subfigure{\label{fig:b}
\begin{tikzpicture}[scale=.8]
\node [noddee] at (-.25,0) (x1)  {};
\node [noddee] at (-1,1.1) (x2)  {}
	edge [] (x1);
\node [noddee] at (.5,2) (x3)  {}
	edge [] (x1)
	edge [] (x2);
\node [noddee] at (2,1.1) (x4)  {}
	edge [] (x1)	
	edge [] (x2);
\node [noddee] at (1.25,0) (x5)  {}
	edge [] (x2)
	edge [] (x3)
	edge [] (x4);
\end{tikzpicture}
}
\hspace*{1cm}
\subfigure{\label{fig:c}
\begin{tikzpicture}[scale=.8]
\node [noddee] at (-.25,0) (x1)  {};
\node [noddee] at (-1,1.1) (x2)  {}
	edge [] (x1);
\node [noddee] at (.5,2) (x3)  {}
	edge [] (x2);
\node [noddee] at (2,1.1) (x4)  {}
	edge [] (x1)
	edge [] (x3);
\node [noddee] at (1.25,0) (x5)  {}
	edge [] (x1)
	edge [] (x2)
	edge [] (x4);
\end{tikzpicture}  
}

\caption{Some factor-critical $\ESE$-graphs with $5$ vertices.}
\label{fig: some ESE-graphs size of 5 }
\end{figure}
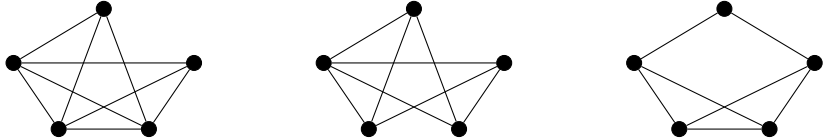

By observing that all the graphs described in Theorem \ref{thm: chracterization of edge-stable} and Remark \ref{rem: small FC ESE} are 2-connected, we obtain as a byproduct the following.

\begin{cor}\label{cor: FC ESE are 2conn}
Factor-critical $\ESE$-graphs are 2-connected.
\end{cor}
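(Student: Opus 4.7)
The plan is to verify the corollary by case analysis, using the complete classification of factor-critical $\ESE$-graphs already established by Theorem \ref{thm: chracterization of edge-stable} together with Remark \ref{rem: small FC ESE}. Combined, these results say that every factor-critical $\ESE$-graph is either one of the finitely many small graphs listed in Remark \ref{rem: small FC ESE} (namely $K_3$, $K_5$, $C_5$, and the three $5$-vertex graphs of Figure \ref{fig: some ESE-graphs size of 5 }), an odd clique $K_{2r+1}$, a member of $\G_1$, or a member of $\G_2$. So the proof reduces to checking $2$-connectivity for each of these types.

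For the small cases I would just inspect the seven graphs of Remark \ref{rem: small FC ESE} one by one; none of them has a cut vertex, as is immediate from the drawings. The odd clique $K_{2r+1}$ has minimum degree $2r$, so it is trivially $2$-connected. For $G \in \G_1$, we have $G \cong K_{2r+1} \setminus M$ for some nonempty matching $M$ with $r \geq 3$, hence the minimum degree of $G$ is at least $2r-1 \geq 5$; deleting any one vertex leaves a graph on $2r$ vertices whose minimum degree is still at least $2r-3 \geq r$, which is clearly connected, so $G$ is $2$-connected.

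The only case needing a short argument is $\G_2$. Here $G$ admits an independent set $S$ with $|S| \geq 3$ complete to $G \setminus S$, and (as shown in the proof of Theorem \ref{thm: chracterization of edge-stable}) $|V(G \setminus S)| = |S|+1 \geq 4$. To show $G - v$ is connected for every $v \in V(G)$, I would split into two subcases. If $v \in S$, then $S \setminus \{v\}$ still contains at least two vertices and remains complete to $G \setminus S$, so every pair of vertices in $G - v$ is joined through $S \setminus \{v\}$. If $v \in V(G \setminus S)$, then $S$ is still complete to $(V(G \setminus S)) \setminus \{v\}$, which contains at least three vertices, and so $G-v$ is again connected.

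The main obstacle, if one can call it that, is purely bookkeeping: making sure that the structural decomposition for $\G_2$ guarantees $|V(G \setminus S)| \geq 4$ so that removing a single vertex from $V(G \setminus S)$ cannot destroy the complete bipartite-like connection through $S$. This was already extracted inside the proof of Theorem \ref{thm: chracterization of edge-stable}, so no new ideas are needed beyond assembling the cases.
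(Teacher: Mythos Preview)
Your proposal is correct and follows the same approach as the paper: the paper simply states that the corollary follows ``by observing that all the graphs described in Theorem \ref{thm: chracterization of edge-stable} and Remark \ref{rem: small FC ESE} are 2-connected,'' and you have carried out precisely this observation case by case (with more explicit detail than the paper itself provides). Two cosmetic remarks: you omitted $K_1$ from the list of small graphs in Remark \ref{rem: small FC ESE}, and for $\G_1$ the minimum degree after deleting a vertex is actually at least $2r-2$ rather than $2r-3$; neither affects the argument.
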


%%%%%%%%%%%%%%%%%%%%%%%%%%%%%%%%%%%%%%%%%%%%%%%%%%%%%%%%%%%%%%%%%%%%%%%%%%%%%%%%%%%%%%%%%%%%%%%
%%%%%%%%%%%%%%%%%%%%%%%%%%%%%%%%%%%%%%%%%%%%%%%%%%%%%%%%%%%%%%%%%%%%%%%%%%%%%%%%%%%%%%%%%%%%%%%

%%%%%%%%%%%%%%%%%%%%%%%%%%%%%%%%%%%%%%%%%%%%%%%%%%%%%%%%%%%%%%%%%%%%%%%%%%%%%%%%%%%%%%%%%%%%%%%
%%%%%%%%%%%%%%%%%%%%%%%%%%%%%%%%%%%%%%%%%%%%%%%%%%%%%%%%%%%%%%%%%%%%%%%%%%%%%%%%%%%%%%%%%%%%%
\section{ESE-graphs with a cut vertex} \label{sec: ESE-graphs with a cut vertex}

The main objective of this section is to show that $\ESE$-graphs with a cut vertex are bipartite. Then we will complete the characterization of all $\ESE$-graphs in the next section with bipartite $\ESE$-graphs.

Recall that we consider only connected $\ESE$-graph. We will first extend the following known result to $\ESE$-graphs.

\begin{lem}\label{lem:cutvertex}\cite{eqm_regular}
Let $G$ be a connected equimatchable graph with a cut vertex $v$, then each connected component of $G-v$ is also equimatchable.
\end{lem}

\begin{lem}\label{lem: cut vertex v, connected comp of G-v}
Let $G$ be a connected $\ESE$-graph with a cut vertex $v$, then each connected component of $G-v$ is also $\ESE$.
\end{lem}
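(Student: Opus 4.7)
The plan is to exhibit a single ``filler'' matching on the rest of $G$ that, when concatenated with any maximal matching of the given component of $G-v$, yields a maximal matching of $G$ whose size tracks exactly the size of the component's matching. Fix a component $G_1$ of $G-v$ and set $H=G-V(G_1)$, which contains $v$ together with the remaining components of $G-v$. Because $v$ is a cut vertex, $G-v$ has at least two components, so $v$ has at least one neighbor $u$ in $V(H)\sm\{v\}$. I would start from the edge $vu$ and greedily extend $\{vu\}$ to a maximal matching $N$ of $H$; by construction $N$ saturates $v$.

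Next I would establish the key observation: for any maximal matching $M$ of $G_1$, the union $M\cup N$ is a maximal matching of $G$. It is a matching since $V(G_1)\cap V(H)=\emptyset$. For maximality, every edge of $G$ is either inside $G_1$ (covered by $M$), inside $H$ (covered by $N$), or of the form $vw$ with $w\in V(G_1)$ (covered because $v$ is saturated by $N$); there are no other edges between $V(G_1)$ and $V(H)$, since $G_1$ is a component of $G-v$. The same observation transfers to $G\sm e$ for any $e\in E(G_1)$: because $e$ lies in $G_1$ and $V(H)$ is disjoint from $V(G_1)$, $N$ remains a maximal matching of the ``$H$-side'' of $G\sm e$, so for any maximal matching $M$ of $G_1\sm e$, the union $M\cup N$ is a maximal matching of $G\sm e$.

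From these observations the lemma follows by a short contradiction argument. If $G_1$ admitted two maximal matchings of different sizes, combining each with the same $N$ would produce two maximal matchings of $G$ of different sizes, contradicting equimatchability of $G$. Similarly, for any $e\in E(G_1)$, two maximal matchings of $G_1\sm e$ of different sizes would extend via $N$ to two maximal matchings of $G\sm e$ of different sizes, contradicting equimatchability of $G\sm e$, which holds because $G$ is $\ESE$. (If $G_1$ is a single vertex, the statement is vacuous.) The only delicate point is the initial choice of $N$: I must be sure to pick $N$ saturating $v$, which is exactly where cut-vertex-hood of $v$ enters, guaranteeing that $v$ has a neighbor inside $H$ so that $\{vu\}$ can be used as the seed.
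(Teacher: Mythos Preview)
Your proof is correct and follows essentially the same approach as the paper: both fix a component, build a maximal matching on the remainder of $G$ that is forced to saturate $v$ via an edge $vu$ into another component, and then combine this fixed ``filler'' with varying maximal matchings of the component (or of the component minus an edge) to derive a contradiction. You have simply spelled out the maximality verification in more detail than the paper does.
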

\begin{proof}
Let $v$ be a cut vertex and $H_1,H_2,\ldots,H_k$ $(k\geq 2)$ be the connected components of $G-v$. Assume $H_i$ is not $\ESE$ for some $i\in [k]$. However, we know from Lemma \ref{lem:cutvertex} that it is equimatchable. Then, there are two maximal matchings $M_1$ and $M_2$ of different sizes in $H_i\sm w_1w_2$ for some $w_1w_2\in E(H_i)$. Let $M$ be a maximal matching of $G\sm H_i$ containing $uv$ where $u\in H_j$ for some $j\neq i$. Then $M_1'=M_1\cup M$ and $M_2'=M_2\cup M$ are maximal matchings of $G \sm \{w_1w_2\}$ with different sizes, contradiction with equimatchability of $G \sm \{w_1w_2\}$.
\end{proof}

The following well-known structural result on maximum matchings will guide us towards our objective.

\begin{thm}[Gallai-Edmonds decomposition] \cite{LP} \label{thm: Gallai} Let $G$ be a graph, $D(G)$ the set of vertices of $G$ that are not saturated by at least one maximum matching, $A(G)$ the set of vertices of $V(G) \sm D(G)$ with at least one neighbor in $D(G)$, and $C(G) \stackrel{def}{=} V(G) \sm (D(G) \cup A(G)) $. Then:
\begin{itemize}
\item[$(i)$] the connected components of $G[D(G)]$ are factor-critical,\medskip
\item[$(ii)$] $G[C(G)]$ has a perfect matching,\medskip
\item[$(iii)$] every maximum matching of $G$ matches every vertex of $A(G)$ to a vertex in a distinct component of $G[D(G)]$.
\end{itemize}
\end{thm}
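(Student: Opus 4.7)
The plan is to prove the three claims in the order $(iii), (i), (ii)$, since the structural identification of how maximum matchings interact with $A(G)$ will give us what we need to analyze components in $D(G)$ and $C(G)$. The whole argument rests on two standard moves: (a) taking symmetric differences of two maximum matchings $M_1 \triangle M_2$ produces a disjoint union of paths and even cycles whose internal vertices are saturated by both matchings, and (b) any $M$-augmenting path contradicts maximality.

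First I would prove a preparatory lemma (Gallai's lemma): for any $v \in D(G)$, \emph{every} connected component $K$ of $G[D(G)]$ containing $v$ satisfies that $K$ itself is factor-critical. To see this, fix $u \in K$ and pick a maximum matching $M_v$ that exposes $v$. Any other vertex $w \in K$ also lies in $D(G)$, so there is a maximum matching $M_w$ exposing $w$. Considering $M_v \triangle M_w$, the component containing $v$ is an alternating path $P$ from $v$ to $w$. Walking along $P$ and flipping $M_v$ along any initial segment ending at a neighbor in $K$ keeps the matching maximum and lets us transfer the exposed vertex from $v$ to any other vertex of $K$. Using induction on $|K|$ combined with the fact that $K$ is connected then yields that $K - u$ admits a perfect matching, i.e., $K$ is factor-critical, proving $(i)$.

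Next, for $(iii)$, let $M$ be any maximum matching and $v \in A(G)$ with a neighbor $u \in D(G)$. Since $v \notin D(G)$, $v$ is saturated by $M$, say by edge $vx$. I need to show $x \in D(G)$ and that distinct vertices of $A(G)$ are matched to distinct components of $G[D(G)]$. Pick a maximum matching $M_u$ exposing $u$. In $M \triangle M_u$, the component containing $u$ is an alternating path $P$ starting with an $M$-edge. If $P$ reaches $v$, swapping along the $u$--$v$ subpath of $P$ within $M$ produces a maximum matching exposing $v$, contradicting $v \notin D(G)$ unless $P$ uses the edge $vu$, in which case $x = u \in D(G)$. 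In general, the component of $D(G)$ to which $v$ is matched is determined by this exchange; distinctness across $A(G)$ is forced, since if two vertices of $A(G)$ were matched into the same component $K$ of $G[D(G)]$, removing these two matched vertices from $M$ and replacing them by a near-perfect matching of $K$ (which exists by factor-criticality of $K$ and exposes one extra vertex) together with suitable swaps produces an $M$-augmenting path.

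Finally, $(ii)$ follows quickly. By definition $C(G)$ has no neighbor in $D(G)$, and by $(iii)$ every vertex of $A(G)$ is matched into $D(G)$ by any maximum matching $M$. Consequently $M$ restricted to $V \setminus (A(G) \cup D(G)) = C(G)$ stays inside $G[C(G)]$, and since every vertex of $C(G)$ is saturated by $M$ (as $C(G) \cap D(G) = \emptyset$), this restriction is a perfect matching of $G[C(G)]$.

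The main obstacle is the careful path-exchange bookkeeping in step two: one must argue simultaneously that the matched partner $x$ of $v \in A(G)$ lies in $D(G)$ and that such partners lie in \emph{different} components of $G[D(G)]$, because collapsing two $A(G)$-vertices into the same factor-critical component would otherwise allow an augmenting path built from a near-perfect matching of that component. Once that exchange argument is pinned down, $(i)$ and $(ii)$ are essentially immediate corollaries.
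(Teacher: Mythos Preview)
The paper does not prove this theorem; it is quoted from \cite{LP} (Lov\'asz--Plummer) as a known structural result and used as a black box. So there is no ``paper's own proof'' to compare your attempt against.

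That said, your sketch has the right high-level ingredients (symmetric differences of maximum matchings, alternating-path exchanges), but two of the steps are genuine gaps rather than routine omitted details.

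For $(i)$, you assert that in $M_v \triangle M_w$ ``the component containing $v$ is an alternating path $P$ from $v$ to $w$.'' This is not automatic: $v$ and $w$ are each endpoints of maximal alternating paths in the symmetric difference, but nothing yet forces them to lie on the \emph{same} path. The standard proof of Gallai's lemma handles exactly this by choosing $w$ adjacent to the current exposed vertex and doing a careful case analysis (or induction); your ``walking along $P$'' shortcut skips the real content. There is a second issue: factor-criticality of $K$ means $K-u$ has a perfect matching, so you need a near-perfect matching of $K$ itself, not merely a maximum matching of $G$ that misses $u$. Restricting a maximum matching of $G$ to $K$ gives this only once you know edges of $M$ do not leave $K$ except through $A(G)$, which is essentially part of $(iii)$.

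For $(iii)$, your case analysis does not actually establish that the $M$-partner $x$ of $v\in A(G)$ lies in $D(G)$. You treat only the special case where the alternating path from $u$ reaches $v$ through the edge $vu$; for all other cases you write ``in general, the component of $D(G)$ to which $v$ is matched is determined by this exchange,'' which presupposes the conclusion. The distinctness-of-components claim is likewise only gestured at: you say an augmenting path would arise, but you do not build it.

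So the outline has the right flavour, but the two exchange arguments are not yet proofs; both need the finer alternating-path bookkeeping that makes the Gallai--Edmonds theorem nontrivial.
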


Item (iii) of Theorem \ref{thm: Gallai} means that in any maximum matching of $G$, different vertices of $A(G)$ never match to vertices in the same component of $G[D(G)]$. The following lemma states that if a graph is equimatchable (but not randomly matchable), then its Gallai-Edmonds decomposition admits some additional properties. Indeed, Lemma \ref{lem: Gallai- C(G) empty and A(G) is ind. set} holds for $\ESE$-graphs with a cut vertex (clearly, $K_2$ has no cut vertex) since they are equimatchable and they do not admit a perfect matching by Corollary \ref{rem:unchanged matching number}.

\begin{lem}\cite{Plummer}\label{lem: Gallai- C(G) empty and A(G) is ind. set}
Let $G$ be a connected equimatchable graph with no perfect matching. Then the sets $C(G)$ and $A(G)$ as defined in the Gallai-Edmonds decomposition of $G$ are such that $C(G)= \emptyset$ and $A(G)$ is an independent set of $G$.
\end{lem}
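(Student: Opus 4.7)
The strategy is to show that if either $C(G) \neq \emptyset$ or $A(G)$ is not independent, then we can locally modify a maximum matching to obtain a maximal matching of strictly smaller size, contradicting equimatchability. Both arguments lean on the structural information of Theorem~\ref{thm: Gallai}, in particular parts $(ii)$--$(iii)$ together with two standard consequences: distinct components of $G[D(G)]$ are pairwise non-adjacent, and $C(G)$ has no neighbors in $D(G)$ (the latter being immediate from the definition of $A(G)$).

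To prove $C(G) = \emptyset$, first observe that $D(G) \neq \emptyset$ because $G$ has no perfect matching. Assume for contradiction that $C(G) \neq \emptyset$. Since $G$ is connected and $C(G)$ has no edges to $D(G)$, there exists an edge $ca$ with $c \in C(G)$ and $a \in A(G)$. Fix any maximum matching $M^*$; by Theorem~\ref{thm: Gallai}, $M^*$ contains an edge $au$ with $u$ in some component $K$ of $G[D(G)]$, and an edge $cc'$ with $c' \in C(G)$. Set $M' = (M^* \setminus \{cc', au\}) \cup \{ca\}$, so $|M'| = |M^*| - 1$. The vertices exposed by $M'$ are those exposed by $M^*$ (one per $G[D(G)]$-component not saturated through $A(G)$ in $M^*$, hence in pairwise distinct components of $G[D(G)]$) together with the two new exposed vertices $c'$ and $u$. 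Since $u$ lies in a component of $G[D(G)]$ different from all other exposed $D(G)$-vertices, and $c' \in C(G)$ has no neighbor in $D(G)$ and no neighbor in $A(G) \setminus \{a\}$ that is exposed ($A(G)$ remains fully saturated in $M'$), no two exposed vertices are adjacent. Thus $M'$ is maximal of size $\nu(G)-1$, contradicting equimatchability.

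For the independence of $A(G)$, assuming now $C(G) = \emptyset$, suppose for contradiction that there exist $a_1, a_2 \in A(G)$ with $a_1 a_2 \in E(G)$. Let $M^*$ be any maximum matching; by Theorem~\ref{thm: Gallai}$(iii)$ it contains edges $a_1 u_1$ and $a_2 u_2$ with $u_1 \in K_1$, $u_2 \in K_2$ for two \emph{distinct} components $K_1, K_2$ of $G[D(G)]$. Define $M' = (M^* \setminus \{a_1 u_1, a_2 u_2\}) \cup \{a_1 a_2\}$, which has size $|M^*|-1$. The vertices exposed by $M'$ are those exposed by $M^*$ (each in a distinct component of $G[D(G)]$) together with $u_1$ and $u_2$, all lying in pairwise distinct components of $G[D(G)]$. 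Since distinct components of $G[D(G)]$ have no edges between them, $C(G) = \emptyset$, and $A(G)$ is fully saturated by $M'$, no two exposed vertices are adjacent, so $M'$ is maximal. This contradicts equimatchability.

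The chief obstacle is the bookkeeping of exactly which vertices become exposed after the local swap and verifying that the resulting matching is maximal; everything hinges on the non-adjacency of distinct $D(G)$-components and on $C(G)$ having no $D(G)$-neighbors. The second argument would fail if $a_1, a_2$ could be matched to the same component of $G[D(G)]$, but this is precisely what Theorem~\ref{thm: Gallai}$(iii)$ forbids; so the two local swaps described above give uniform constructions with no case analysis beyond the observation that $C(G)$ and $A(G) \setminus \{a_1, a_2\}$ remain saturated.
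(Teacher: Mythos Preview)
The paper does not supply its own proof of this lemma: it is quoted from \cite{Plummer} and used as a black box. So there is nothing in the paper to compare your argument against.

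That said, your proof is correct and is essentially the standard argument. Two small points are worth making explicit, since you lean on them implicitly. First, in the $C(G)\neq\emptyset$ case you use that $c$ is matched by $M^*$ to some $c'\in C(G)$; this is justified because part~$(iii)$ forces every $A(G)$-vertex to be matched into $D(G)$, so $C(G)$-vertices can only be matched inside $C(G)$, and by the definition of $D(G)$ every vertex of $C(G)$ is saturated by every maximum matching (note also $|C(G)|$ is even by part~$(ii)$). Second, you assert that ``$u$ lies in a component of $G[D(G)]$ different from all other exposed $D(G)$-vertices'', i.e.\ that the component $K$ hit by $a$ is fully saturated by $M^*$. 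This is the standard refinement of Gallai--Edmonds (every maximum matching induces a near-perfect matching on each component of $G[D(G)]$); it follows easily here because only one $A(G)$-edge enters $K$ by part~$(iii)$, so $M^*$ restricted to $K-u$ is a matching of the factor-critical graph $K$ minus one vertex, and if it were not perfect one could augment inside $K-u$ and contradict maximality of $M^*$. With these two points spelled out, both of your local swaps produce maximal matchings of size $\nu(G)-1$, yielding the desired contradictions.
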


In the sequel, we show that if we use the property of being edge-stable in addition to equimatchability,  each component of $G[D(G)]$ in the Gallai-Edmonds decomposition of an $\ESE$-graph $G$ with a cut vertex restricts to a single vertex. We will show this in two steps.

\begin{lem}\label{lem:ESE with cut vertex, G-v has no FC}
Let $G$ be a connected $\ESE$-graph and $v\in A(G)$ be a cut vertex where $A(G)$ is defined as in the Gallai-Edmonds decomposition of $G$. Then every factor-critical component of $G-v$ is $K_1$.
\end{lem}
\begin{proof}
Let $F$ be a factor-critical component of $G-v$ which is not $K_1$. Since $F$ is factor-critical, it has an odd number of vertices. Thus $F$ has at least  $3$  vertices. 

We first note that since $G$ is $\ESE$ and $F$ is a connected component of $G-v$, Lemma~\ref{lem: cut vertex v, connected comp of G-v} implies that $F$ is $\ESE$. Consider $w\in N_F(v)$, we claim that $F-w$ is also $\ESE$. Assume for a contradiction that $F-w$ is not $\ESE$. If $F-w$ is not equimatchable, let $M_1$ and $M_2$ be two maximal matchings of different sizes in $F-w$. Let also $M$ be a maximal matching of $G\sm (F\cup \{v\})$. Then $M\cup M_1\cup \{vw\}$ and  $M\cup M_2\cup \{vw\}$ are two maximal matchings  of different sizes in $G$, contradicting that $G$ is equimatchable. Now, if $F-w$ is equimatchable but not $\ESE$, then let $M_1$ and $M_2$ be two maximal matchings of different sizes in $(F-w)\sm e$ where $e\in E(F-w)$.  Then $M\cup M_1\cup \{vw\}$ and  $M\cup M_2\cup \{vw\}$ are two maximal matchings  of $G\sm e$ having different sizes, contradicting that $G$ is $\ESE$.

Since $F$ is factor-critical $\ESE$ (with at least 3 vertices), by Corollary \ref{cor: FC ESE are 2conn}, $F$ is 2-connected. Therefore $F-w$ is a connected $\ESE$-graph which admits a perfect matching. This contradicts Corollary \ref{rem:unchanged matching number} unless $F-w$ is a $K_2$. So, assume $F-w$ is a $K_2$, and consequently, $F$ is a $K_3$ (since $K_3$ is the only factor-critical graph on 3 vertices). Let $M$ be a maximal matching of $G\sm (F\cup\{v\})$. Let $w_1w_2$ be the edge in $F-w$. Then $M\cup \{vw,w_1w_2\}$ is a maximal matching of $G$, which is also of maximum size since $G$ is equimatchable. On the other hand, $M\cup \{vw\}$ is a maximal matching in $G\sm e$ where $e=w_1w_2$; this matching is also of maximum size (in $G\sm e$) since $G\sm e$ is equimatchable. It follows that $\nu(G\sm e)=\nu(G)-1$ which contradicts that $G$ is $\ESE$ by Proposition \ref{prop:unchanged matching number}. This concludes the proof.
\end{proof}

%Since randomly-matchable graphs with at least 3 vertices has no cut vertex we have the following observation. 
%\begin{cor}\label{cor:ESE-cut vetex has no perfect matching}
%Let $G$ be a ESE-graph. If $G$ has a cut vertex, there is no a perfect matching. 
%\end{cor}

\begin{lem}\label{lem: ESE cut vertex-- componnet is a vertex}
Let $G$ be a connected $\ESE$-graph with a cut vertex. Then every component of $D(G)$ in the Gallai-Edmonds decomposition is $K_1$.
\end{lem}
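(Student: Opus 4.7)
The plan is to assume for contradiction that some component $D$ of $G[D(G)]$ is non-trivial and derive a contradiction with the edge-stability of $G$ by exhibiting two maximal matchings of $G\sm e$ of different sizes for a carefully chosen edge $e\in E(D)$. For preliminaries: by Theorem \ref{thm: Gallai}$(i)$ the component $D$ is factor-critical, so $|V(D)|\geq 3$; by Corollary \ref{cor: FC ESE are 2conn}, since $G$ has a cut vertex, $G$ is not factor-critical, hence $V(D)\subsetneq V(G)$; Corollary \ref{rem:unchanged matching number}$(i)$ gives that $G$ has no perfect matching, so Lemma \ref{lem: Gallai- C(G) empty and A(G) is ind. set} yields $C(G)=\emptyset$ and $A(G)$ independent; and the connectedness of $G$ together with the absence of edges between distinct components of $G[D(G)]$ forces $B:=N_G(V(D))\cap A(G)\neq\emptyset$.

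The crucial step is to exhibit a maximum matching $M^*$ of $G$ that ``covers'' $D$, meaning that some $a\in B$ is matched in $M^*$ to a vertex $u\in V(D)\cap N(a)$ with $M^*\cap E(D)$ a perfect matching of $D-u$. By Theorem \ref{thm: Gallai}$(iii)$, maximum matchings of $G$ correspond to matchings of the auxiliary bipartite graph $H$ — whose parts are $A(G)$ and the components of $G[D(G)]$, with edges given by adjacency in $G$ — that saturate $A(G)$. Starting from any such $H$-matching $M_H$: if $D$ is already matched in $M_H$ to some $a\in B$, I pick $u$ to be any vertex of $V(D)\cap N(a)$; otherwise I take any $a\in B$, whose current partner $D_a\neq D$, and replace the edge $(a,D_a)$ by $(a,D)$, producing an $H$-matching of the same size in which $(a,D)$ appears, and again choose $u\in V(D)\cap N(a)$. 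In either case I combine this $H$-matching with near-perfect matchings within each component of $G[D(G)]$ — using the factor-criticality of $D$ to expose $u$ in $D$, and analogous near-perfect matchings exposing the $H$-matched vertices in the other covered components — to obtain the desired maximum matching $M^*$ of $G$.

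Because $|V(D)|\geq 3$, the perfect matching $M^*\cap E(D)$ of $D-u$ contains an edge $e=u_2u_2'$, and I then consider $M^*\sm\{e\}$ as a matching of $G\sm e$ of size $\nu(G)-1$. Its exposed vertices are $u_2$, $u_2'$, and those exposed by $M^*$; by Theorem \ref{thm: Gallai}$(iii)$ the latter lie, at most one per component of $G[D(G)]$, in components distinct from $D$ (since $D$ is covered by $M^*$), and none belongs to $A(G)$. Any augmenting edge in $G\sm e$ between two exposed vertices would therefore either be $e$ itself (removed) or join vertices of $D(G)$ in two different components of $G[D(G)]$, which is impossible. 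Hence $M^*\sm\{e\}$ is a maximal matching of $G\sm e$ of size $\nu(G)-1$, while $\nu(G\sm e)=\nu(G)$ by Proposition \ref{prop:unchanged matching number}, so $G\sm e$ has maximal matchings of two different sizes, contradicting the edge-stability of $G$. The main obstacle is producing the covering max matching $M^*$ via the $H$-swap above; once it is in hand, the maximality of $M^*\sm\{e\}$ follows from the basic fact that $G[D(G)]$ has no edges between its components.
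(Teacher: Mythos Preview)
Your argument is correct and takes a genuinely different route from the paper. The paper exploits equimatchability together with Theorem~\ref{thm: Gallai}(iii) to show that the edges between a nontrivial component $D$ and $A(G)$ form a star: any two such edges $w_1a_1$ and $w_2a_2$ must share an endpoint, for otherwise $\{w_1a_1,w_2a_2\}$ extends to a maximal (hence maximum) matching sending two vertices of $A(G)$ into the same component. The centre of that star is then a cut vertex, and the contradiction is obtained by invoking the earlier chain Lemma~\ref{lem: cut vertex v, connected comp of G-v} and Lemma~\ref{lem:ESE with cut vertex, G-v has no FC} (the latter depending in turn on Lemma~\ref{lem: chrc of ESE graph with a cut vertex }). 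Your approach bypasses this structural detour: you build one explicit maximum matching $M^*$ that fully saturates $D$, remove an interior edge $e$ of $D$, and check directly that $M^*\sm e$ is maximal in $G\sm e$ because all exposed vertices lie in $D(G)$ with at most one per component other than $D$. This is more self-contained (it does not need Lemmas~\ref{lem: cut vertex v, connected comp of G-v}--\ref{lem:ESE with cut vertex, G-v has no FC}) and in fact proves the slightly stronger statement that any non-factor-critical $\ESE$-graph has only trivial $D(G)$-components, the cut-vertex hypothesis being used solely, via Corollary~\ref{cor: FC ESE are 2conn}, to rule out factor-criticality. The paper's route, on the other hand, makes the geometry of the $D$--$A(G)$ interface explicit, which fits its narrative of reducing everything to cut-vertex behaviour.
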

\begin{proof}
Consider a connected $\ESE$-graph $G$ with a cut vertex. By Corollary \ref{cor: FC ESE are 2conn}, $G$ is not factor-critical. Moreover, $G$ has at least 3 vertices since it has a cut vertex; thus, by Corollary \ref{rem:unchanged matching number}, $G$ does not have a perfect matching. Consequently, $G$ has a Gallai-Edmonds decomposition as described in Lemma \ref{lem: Gallai- C(G) empty and A(G) is ind. set} where $C(G)=\emptyset$ nd $A(G)$ is a nonempty independent set. 

Let $F$ be a factor-critical component of $G[D(G)]$ which is not $K_1$. Since $F$ is factor-critical, it has an odd number of vertices. Thus $F$ has at least  $3$  vertices. By Theorem \ref{thm: Gallai} $(iii)$, and the equimatchability of $G$, every maximal matching of $G$ matches every vertex of $A(G)$ to a vertex in a distinct component of $G[D(G)]$. This implies that for vertices $w_1,w_2 $ belonging to the same connected component of $G[D(G)]$ and $a_1,a_2 \in A(G)$, if $w_1a_1\in E(G)$ and $w_2a_2\in E(G)$, then we have $w_1=w_2$ or $a_1=a_2$. It follows that, for every edge $wa$ where $w\in V(F)$ and $a\in A(G)$, at least one of $w$ and $a$ is a cut vertex. 

If $a$ is a cut vertex then $F$ is a connected component of $G-a$ which is factor-critical by Theorem \ref{thm: Gallai} $(i)$; however $F$ is different from $K_1$, contradiction by Lemma \ref{lem:ESE with cut vertex, G-v has no FC}. Otherwise, $w$ is a cut vertex, and every connected component of $G-w$, in particular $F-w$, is also $\ESE$ by Lemma \ref{lem: cut vertex v, connected comp of G-v}. Since $F$ is factor-critical, $F-w$ has a perfect matching; this contradicts that $F-w$ is $\ESE$ by Corollary \ref{rem:unchanged matching number} unless every connected component of $F-w$ is a $K_2$. So, assume that every connected component of $F-w$ is a $K_2$. Let $w_1w_2$ be an edge  in $F-w$ and $M$ be a maximal matching of $G\sm \{a,w,w_1,w_2\}$.  Then $M\cup \{wa,w_1w_2\}$ is a maximal matching of $G$ which is also maximum since $G$ is equimatchable. Besides, $M\cup\{wa\}$ is a maximal matching of $G\sm w_1w_2$ (note that $w_1$ and $w_2$ are only adjacent to $w$  in $G \sm w_1w_2$) which is also maximum since $G\sm w_1w_2$ is equimatchable. This contradicts that $G$ is $\ESE$ by Proposition \ref{prop:unchanged matching number} which states that the removal of an edge whose endpoints do not form a connected component does not change the matching number of an $\ESE$-graph.
\end{proof}

We obtain the main result of this section, namely Theorem \ref{thm: ESE with cut vertex is bip}, by combining Lemma \ref{lem: Gallai- C(G) empty and A(G) is ind. set} and Lemma \ref{lem: ESE cut vertex-- componnet is a vertex}. Lemma  \ref{lem: Gallai- C(G) empty and A(G) is ind. set} implies that an $\ESE$-graph $G$ has a Gallai-Edmonds decomposition where $C(G)=\emptyset$ and $A(G)$ is an independent set. Lemma  \ref{lem: ESE cut vertex-- componnet is a vertex} shows that if in addition $G$ has a cut vertex, then every component in $G[D(G)]$ is a $K_1$. This structure clearly implies a bipartite graph.

\begin{thm}\label{thm: ESE with cut vertex is bip}
Connected $\ESE$-graphs with a cut vertex are bipartite.
\end{thm}

\section{Bipartite ESE-graphs}\label{sec: bip ESE}
%\rule[0pt]{0pt}{0pt}\medskip

Having characterized all 2-connected factor-critical $\ESE$-graphs in Section \ref{sec:FC ESE} and having shown that $\ESE$-graphs with a cut vertex are bipartite (Theorem \ref{thm: ESE with cut vertex is bip}), we now consider bipartite $\ESE$-graphs to complete the characterization of all $\ESE$-graphs. We will see that bipartite $\ESE$-graphs can be characterized in a way very similar to bipartite equimatchable graphs.

\begin{lem}\cite{Plummer} \label{Lem: chr of equim bip graph}
A connected bipartite graph $G=(U \cup W, E)$ with $\vert U \vert \leq \vert W \vert$ is equimatchable if and only if for every $u \in U$, there exists $S\subseteq N(u)$ such that $S \neq \emptyset$ and $\vert N(S) \vert \leq \vert S \vert $. 
\end{lem}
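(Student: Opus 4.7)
I would reduce the biconditional to a single vertex-local property: a vertex $u\in U$ is \emph{strong} (i.e., saturated by every maximal matching of $G$) if and only if the condition in the statement holds at $u$. First I would establish this equivalence via Hall's theorem applied to the bipartite subgraph $H$ on parts $N(u)\subseteq W$ and $U\setminus\{u\}$, inheriting the edges of $G$. A matching in $H$ saturating $N(u)$ exists iff $|N_G(S)\setminus\{u\}|\geq|S|$ for every nonempty $S\subseteq N(u)$, which, using $u\in N_G(S)$, is exactly $|N_G(S)|>|S|$. Any such matching extends to a maximal matching of $G$ missing $u$, and conversely every maximal matching of $G$ missing $u$ must saturate $N(u)$ and so restricts to such a matching in $H$. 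Thus the paper's condition at $u$ is precisely the negation of ``$u$ is weak''.

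The sufficiency direction is then short: if every $u\in U$ is strong, then every maximal matching of $G$ saturates $U$, so using $|U|\leq|W|$ it has size exactly $|U|$, proving equimatchability. For necessity, I would distinguish two cases. If $G$ has a perfect matching, equimatchability implies $G$ is randomly matchable; in the connected bipartite setting this forces $G\cong K_{n,n}$, for which every vertex is trivially strong. If $G$ has no perfect matching, I would invoke the Gallai-Edmonds decomposition (Theorem~\ref{thm: Gallai}) together with Lemma~\ref{lem: Gallai- C(G) empty and A(G) is ind. set} to get $C(G)=\emptyset$ and $A(G)$ independent. Since each factor-critical component of $D(G)$ is a single vertex in the bipartite case, $D(G)$ is independent too; hence every edge of $G$ runs between $A(G)\cap U$ and $D(G)\cap W$, or between $A(G)\cap W$ and $D(G)\cap U$, and these two types of edges span vertex-disjoint subgraphs.

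Connectedness therefore forces one of these two subgraphs to be empty. Combined with $|U|\leq|W|$ and the Gallai-Edmonds identity $\nu(G)=|A(G)|$, the scenario $D(G)\cap U\neq\emptyset$ would force $U=D(G)\cap U$ and $W=A(G)\cap W$, hence $\nu(G)=|W|\geq|U|$, which together with $\nu(G)\leq|U|$ gives $\nu(G)=|U|$; but then no vertex of $U$ is missed by any maximum matching, contradicting $D(G)\cap U\neq\emptyset$. So $D(G)\cap U=\emptyset$, every $u\in U$ lies in $A(G)$, and by equimatchability every maximal matching saturates $u$. The main obstacle is this case analysis in the necessity direction: pinning down the bipartite refinement of the Gallai-Edmonds structure and then using connectivity plus the size constraint $|U|\leq|W|$ to exclude the ``wrong'' block---the rest of the argument is then mechanical.
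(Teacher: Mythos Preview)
The paper does not supply its own proof of this lemma: it is quoted from \cite{Plummer} and used as a black box. The only related remark in the paper is the sentence immediately following the statement, which observes (without proof) that the lemma, combined with Hall's condition, is equivalent to ``every vertex of $U$ is strong''. So there is no in-paper argument to compare your proposal against.

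That said, your proposal is a correct and self-contained proof. Your local equivalence---$u$ is strong iff some nonempty $S\subseteq N(u)$ satisfies $|N(S)|\le|S|$---is exactly the Hall-based reduction the paper alludes to, carried out cleanly. The sufficiency direction is immediate from it. For necessity, your Case~1 is fine (and could even be shortened: once every maximal matching is perfect, every vertex is trivially strong, so invoking Sumner's $K_{n,n}$ classification is unnecessary). Your Case~2 via Gallai--Edmonds and Lemma~\ref{lem: Gallai- C(G) empty and A(G) is ind. set} is correct; the key step---that connectedness forces $\{A(G),D(G)\}$ to coincide with $\{U,W\}$, and then $|U|\le|W|$ together with $\nu(G)=|A(G)|$ rules out $U=D(G)$---goes through exactly as you outline. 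One small cosmetic point: you write ``$U=D(G)\cap U$ and $W=A(G)\cap W$'', but you mean (and use) the stronger $U=D(G)$ and $W=A(G)$.

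The only caveat worth flagging is a possible circularity concern: Lemma~\ref{lem: Gallai- C(G) empty and A(G) is ind. set} is itself cited from \cite{Plummer}, so if you are reconstructing Plummer's original argument you should check that result does not rely on the present lemma. Within the logical structure of \emph{this} paper, however, both are taken as external inputs, so your derivation is sound.
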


Let us remind the well-known Hall's Theorem in order to obtain a more intuitive reformulation of Lemma \ref{Lem: chr of equim bip graph}: in a bipartite graph $G=(X\cup Y, E)$ with $|X| \leq |Y|$, there exists a matching saturating all vertices in $X$ if and only if for all subset $A\subseteq  X$, we have $|N(A)|\geq |A|$. The contrapositive of Lemma \ref{Lem: chr of equim bip graph} states that  $G=(U \cup W, E)$ with $\vert U \vert \leq \vert W \vert$ is not equimatchable if and only if there is a vertex $u\in U$ such that for all nonempty $S\subseteq N(u)$, we have $|N(S)|>|S|$. Hall's condition applied to the bipartite subgraph induced by $N(u)$ and $N(N(u))$ implies that the later condition is equivalent to the fact that there is a vertex $u\in U$ such that there is a matching of $G-u$ saturating all vertices of $N(u)$, or alternatively leaving $u$ exposed. The contrapositive of this equivalence suggests the following more intuitive statement.

Let us call a vertex $v \in V(G)$ \emph{strong} (in $G$) if every maximal matching of $G$ saturates $v$, or equivalently, if $v$ is not isolated vertex, there is no maximal matching of $G - v$ saturating all neighbors of $v$.

\begin{cor}\label{lem:new lemma}
Let $G=(U \cup W, E)$ be a connected bipartite graph with $\vert U \vert \leq \vert W \vert$. Then $G$ is equimatchable if and only if every vertex in $U$ is strong (or equivalently every maximal matching of $G$ saturates $U$).
\end{cor}

Next we point out a remark which allows us to restrict our attention to bipartite graphs with one part of the bipartition having strictly less vertices than the other part.

\begin{rem}\label{rem:U less than W}
Let  $G=(U \cup W, E)$, $\vert U \vert \leq \vert W \vert$ be a connected bipartite $\ESE$-graph with at least three vertices, then $|U| < |W|$.
\end{rem}
\begin{proof}
By Lemma \ref{Lem: chr of equim bip graph}, every maximal matching of $G$ saturates $U$. It follows that if $|U|=|W|$ then $G$ has a perfect matching, contradiction with Corollary \ref{rem:unchanged matching number}.
\end{proof}

It should be noted that in addition to connected bipartite $\ESE$-graphs with at least three vertices and  $|U| < |W|$, we also have $K_1$ and $K_2$ which are obviously bipartite $\ESE$ (by convention).

A strong vertex $u$ with $d(u) \geq 2$ is called  \emph{square-strong} if for every $v \in N(u) $, $u$ is strong in $G-v$. It follows from this definition that if $u$ is a square-strong vertex, then for all $v\in N(u)$, every maximal matching of $G-v$ saturates $u$, or equivalently, every maximal matching of $G$ leaves at least one vertex of $N(u)$ exposed.

\begin{prop}\label{prop: charc of ESE bip graph}
Let $G=(U \cup W, E)$ be a connected bipartite graph with $\vert U \vert < \vert W \vert$. Then the followings are equivalent. 
\begin{itemize}
\item[$(i)$] $G$ is $\ESE$.
\item[$(ii)$] Every vertex of $U$ is square-strong.
\item[$(iii)$] For every $u \in U$, there exists a nonempty set $S \subseteq N(u)$ such that $\vert N(S) \vert \leq \vert S \vert -1 $. 
\end{itemize}
\end{prop}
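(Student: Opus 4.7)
The plan is to establish (i) $\Leftrightarrow$ (ii) using Plummer's characterization of bipartite equimatchability (Lemma~\ref{Lem: chr of equim bip graph}), and then to derive (ii) $\Leftrightarrow$ (iii) by Hall-type manipulations; the serious work is in the direction (ii) $\Rightarrow$ (iii).

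For (i) $\Leftrightarrow$ (ii), I would fix $e=uw \in E$ with $u \in U$, $w \in W$, and analyze when $G \sm e$ is equimatchable. By Lemma~\ref{Lem: chr of equim bip graph} and the remark following it, this is equivalent to every vertex of $U$ being strong in $G \sm e$. For any $u' \in U \sm \{u\}$, the edge $e$ is not incident to $u'$, so $N_{G \sm e}(u') = N_G(u')$ and for any $T \subseteq N(u')$ the neighborhood in $G \sm e$ coincides with $N_G(T)$; hence a Hall-deficiency witness for $u'$ in the equimatchable graph $G$ remains valid in $G \sm e$, giving strongness of $u'$ automatically. Thus only the strongness of $u$ in $G \sm uw$ is at stake. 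Writing the Hall-type criterion for "$u$ strong in $G \sm uw$" involves subsets $T \subseteq N(u) \sm \{w\}$; because each such $T$ lies in $W \sm \{w\}$ and $w \notin N_G(T)$, the neighborhoods of $T$ in $G \sm uw$ and in $G-w$ both coincide with $N_G(T)$, so $u$ is strong in $G \sm uw$ if and only if $u$ is strong in $G-w$. Ranging over all edges, (i) becomes: for every $u \in U$ and every $w \in N(u)$, $u$ is strong in $G-w$, which is exactly (ii).

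For the easy direction (iii) $\Rightarrow$ (ii), I would pick a witness $S \subseteq N(u)$ with $|N(S)| \leq |S|-1$; since $u \in N(S)$ this forces $|S| \geq 2$, and for any $w \in N(u)$ the choice $T = S$ (if $w \notin S$) or $T = S \sm \{w\}$ (if $w \in S$) yields $|N(T)| \leq |T|$, witnessing that $u$ is strong in $G-w$. The main obstacle is the converse (ii) $\Rightarrow$ (iii). Here my plan is: fix $u$ square-strong and pick an inclusion-minimal nonempty $S \subseteq N(u)$ with $|N(S)| \leq |S|$ (which exists because $u$ is strong); assume for contradiction $|N(S)| = |S|$, and pick any $w \in S$. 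Square-strongness yields $T \subseteq N(u) \sm \{w\}$ nonempty with $|N(T)| \leq |T|$, and the minimality of $S$ rules out $T \subsetneq S$, so $T \sm S \neq \emptyset$. Setting $R = S \cup T$ and using $|N(R)| = |N(S)| + |N(T)| - |N(S) \cap N(T)|$, the desired bound $|N(R)| \leq |R|-1$ reduces to $|N(S) \cap N(T)| \geq |S \cap T|+1$. If $S \cap T = \emptyset$ this follows from $u \in N(S) \cap N(T)$; if $S \cap T \neq \emptyset$, the minimality of $S$ applied to the proper nonempty subset $S \cap T$ gives $|N(S \cap T)| \geq |S \cap T|+1$, and $N(S \cap T) \subseteq N(S) \cap N(T)$ finishes the argument. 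In either case, $R$ is the required witness for (iii).
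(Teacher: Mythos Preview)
Your argument is correct. The overall structure differs from the paper's: they prove the cycle $(i)\Rightarrow(ii)\Rightarrow(iii)\Rightarrow(i)$, whereas you establish $(i)\Leftrightarrow(ii)$ and $(ii)\Leftrightarrow(iii)$ separately. The interesting divergence is in $(ii)\Rightarrow(iii)$. The paper dispatches it in one line via the matching interpretation: if $u$ is square-strong then (by the remark just before the proposition) every maximal matching of $G-u$ leaves at least two vertices of $N(u)$ exposed, so no matching of $G$ saturates $N(u)$; Hall's condition therefore fails for $N(u)$, and the witness $S$ with $|N(S)|<|S|$ drops out immediately. Your route stays entirely inside Hall-type set inequalities, taking an inclusion-minimal deficient set $S$ and merging it with the witness $T$ supplied by square-strongness to produce $R=S\cup T$ with the stronger deficiency. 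This is a genuine alternative that never invokes the matching direction of Hall's theorem, at the cost of a more intricate argument; the paper's approach is shorter, while yours is more self-contained combinatorially.

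One small point worth tightening in your $(i)\Leftrightarrow(ii)$: you invoke Lemma~\ref{Lem: chr of equim bip graph} and the remark after it for $G\sm e$, but that statement is formulated for connected graphs, and $G\sm e$ need not be connected. The direction ``every vertex of $U$ strong in $G\sm e$ implies $G\sm e$ equimatchable'' is unconditional, but the converse (needed for $(i)\Rightarrow(ii)$) requires knowing $\nu(G\sm e)=|U|$, which follows from Proposition~\ref{prop:unchanged matching number}; it would be cleaner to cite that explicitly.
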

\begin{proof}

$(i) \Rightarrow (ii):$ Given a connected $\ESE$ graph $G=(U \cup W, E)$ with $\vert U \vert < \vert W \vert$. Since $G$ is equimatchable, by Corollary \ref{lem:new lemma}, every vertex in $U$ is strong, and thus,  of degree at least $2$. Suppose by contradiction that there exists $u \in U$ such that $u$ is not square-strong. This means that there is a vertex $v \in N(u)$ such that $u$ is not strong in $G-v$, that is, there is a maximal matching of $G-v$ which leaves $u$ exposed; let $M'$ be such a maximal matching of $G-v$. Then $M'\cup \{uv\}$ is a maximal matching of $G$ of size one more than $M'$, contradiction with $G$ being $\ESE$ by  Proposition \ref{prop:unchanged matching number}.
 %On the other hand, $u$ is strong by Corollary \ref{lem:new lemma} since $G$ is equimatchable. It follows that every maximal matching of $G$ saturates $u$. Let $M$ be a maximal matching of $G$ containing $uv$ and 

$(ii) \Rightarrow (iii):$ Assume every vertex of $U$ is square-strong. It means that for each $u \in U$, no matching in $G$ saturates $N(u)$. Therefore, Hall's condition does not hold for $N(u)$, and consequently there exists a nonempty set $S \subseteq N(u)$ such that $\vert N(S) \vert < \vert S \vert  $.

$(iii) \Rightarrow (i):$ Remark that $G$ is equimatchable  by Lemma \ref{Lem: chr of equim bip graph}. It remains to show that $G \setminus e$ is equimatchable for every $e=w_1w_2 \in E(G)$. Consider a vertex $u \in V(G\setminus e)$, then by assumption, there exists a nonempty set $S \subseteq N(u)$ such that  $\vert N(S) \vert \leq \vert S \vert-1 $.  If the endpoints of $e$ belong to $S\cup\{u\}$, say $w_1 \in S$ and $w_2 = u$ , let $S'=S\setminus w_1$, then we have $ \vert N_{G \setminus e}(S') \vert \leq \vert N_{G}(S') \vert \leq \vert N_G(S) \vert \leq \vert S \vert -1 =\vert S' \vert$ and $S' \subseteq N_{G \setminus e}(u)$, which implies by Lemma \ref{Lem: chr of equim bip graph} that $G \setminus e $ is equimatchable. For the other cases, we conclude similarly since $  \vert N_{G \setminus e}(S) \vert \leq \vert N_G(S) \vert \leq \vert S \vert$. As a result, $G\setminus e$ is equimatchable. 
\end{proof}
One can reformulate Proposition \ref{prop: charc of ESE bip graph} as follows: A connected bipartite graph $G=(U \cup W, E)$ with $\vert U \vert < \vert W \vert$ is $\ESE$ if and only if for every $u\in U$, every maximal matching of $G$ leaves at least one vertex of $N(u)$ exposed. 
Now, let us consider the contrapositive of the equivalence between Proposition \ref{prop: charc of ESE bip graph} $(i)$ and $(iii)$. This suggests that $G$ is not $\ESE$ if and only if there exists a vertex $u\in U$ such that for every non-empty subset $S\subseteq N(u)$ we have $|N(S)|\geq |S|$. 
The later condition together with Hall's Theorem implies the following characterization of graphs which are not $\ESE$. This formulation is indeed the most convenient one for our recognition algorithm, and thus worth mentioning separately. 

\begin{cor}\label{cor: not ESE bip graph}
A connected bipartite graph $G=(U \cup W, E)$ with $\vert U \vert < \vert W \vert$ is not $\ESE$ if and only if there exists $u\in U$ such that $N(u)$ is saturated by some (maximal) matching of $G$.
\end{cor}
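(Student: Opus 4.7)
The plan is to derive the corollary by directly negating the third characterization in Proposition \ref{prop: charc of ESE bip graph} and then invoking Hall's marriage theorem to convert the defect condition into a statement about matchings.

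First, I would unpack the logical content. By Proposition \ref{prop: charc of ESE bip graph}, $G$ is $\ESE$ if and only if for every $u\in U$ there is a nonempty $S\subseteq N(u)$ with $|N(S)|\le |S|-1$. Negating this, $G$ is \emph{not} $\ESE$ if and only if there exists $u\in U$ such that $|N(S)|\ge |S|$ for every nonempty $S\subseteq N(u)$.

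Next, I would recognize this last condition as Hall's condition applied to the bipartite subgraph between $N(u)\subseteq W$ and $U$. Specifically, viewing $N(u)$ as the side to be saturated and noting that for any $T\subseteq N(u)$ the set $N(T)$ lies in $U$, Hall's theorem tells us that the inequality $|N(T)|\ge |T|$ for every nonempty $T\subseteq N(u)$ is equivalent to the existence of a matching $M_0$ in $G$ saturating $N(u)$. This gives the ``only if'' direction, and the ``if'' direction is immediate: any matching saturating $N(u)$ forces $|N(T)|\ge |T|$ for every $T\subseteq N(u)$ by the distinct-endpoint property of matchings.

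Finally, I would justify the parenthetical \emph{(maximal)}: any matching $M_0$ of $G$ can be greedily extended to a maximal matching $M\supseteq M_0$, and since $M$ still contains all edges of $M_0$, it still saturates $N(u)$. Thus the existence of a matching saturating $N(u)$ is equivalent to the existence of a maximal matching saturating $N(u)$, completing the equivalence. No step here looks like a genuine obstacle; the whole argument is a clean one-line combination of the proposition with Hall's theorem, and the only thing to be careful about is correctly identifying the side on which Hall's condition is being applied.
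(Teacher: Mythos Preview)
Your proposal is correct and matches the paper's own justification exactly: the paper simply states that the corollary is a consequence of Proposition~\ref{prop: charc of ESE bip graph} and Hall's condition, and you have spelled out precisely that deduction (negate condition~(iii), recognize the resulting inequality as Hall's condition for saturating $N(u)$, and extend to a maximal matching).
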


%
%\begin{rem}\label{rem: some ESE bip graph}
%The following graphs are bipartite ESE with $n$ vertices:  
%\begin{itemize}
%\item $K_{{\lceil \frac{n}{2} \rceil -1}, {\lfloor \frac{n}{2} \rfloor +1}}$
%\item All connected spanning subgraphs of $K_{r,s}$ with $r<s$ and {\color{red} $r+s=n$} such that for every $u \in U$,  $d(u) \geq \vert U \vert+1$ where $U$ is small part of bipartite graph and $1\leq r \leq {\lceil \frac{n}{2} \rceil -2}$ and $s \leq {\lfloor \frac{n}{2} \rfloor +2} $
%\end{itemize}
%\end{rem}
%\begin{proof}
%A complete bipartite graph $K_{n,n+1}$ or $K_{n,n+2}$ can be easily seen that an ESE graph. For $1\leq r \leq {\lceil \frac{n}{2} \rceil -2}$ and $s \leq {\lfloor \frac{n}{2} \rfloor +2} $, let we consider the graph $K_{r,s}$, by denoted $V(G)=(U\cup W)$. Since every $u\in U$ has $\vert U \vert+1$ neighbours, there is no a matching saturating $N(u)\sm w$ for any $w\in W$. Therefore, $u$ is a strong vertices and also every maximal matching isolate at least two vertices of $N(u)$. First of all, $G$ is equimatchable since every vertex of $U$ is strong vertices, so every maximal matching saturates $U$. Moreover, the graph is edge-stable since there is no a maximal matching saturating $N(u)\sm w$, that is every vertex of $G\sm e$ for an edge $e\in E(G)$ is again strong in $G-v$. Hence, $G$ is edge stable.   
%\end{proof}

%%%%%%%%%%%%%%%%%%%%%%%%%%%%%%%%%%%%%%%%%%%%%%%%%%%%%%%%%%%%%%%%%%%%%%%%%5
\section{Recognition of ESE-graphs}\label{sec: Recog ESE}

The recognition of $\ESE$-graphs is trivially polynomial since checking equimatchability can be done in time $O(n^2m)$ for a graph with $n$ vertices and $m$ edges (see \cite{demange-ekim-equi}) and it is enough to repeat this check for every edge removal. This trivial procedure gives a recognition algorithm for $\ESE$-graphs in time $O(n^2m^2)$. However, using the characterization of $\ESE$-graphs, we can improve this time complexity in a significant way.

\begin{thm}\label{thm: ESE recognition}
$\ESE$-graphs can be recognized in time $O(\min(n^{3.376}, n^{1.5}m))$.
\end{thm}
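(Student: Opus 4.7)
The plan is to exploit the structural characterization developed in the previous sections and avoid the naive approach of removing each edge and testing equimatchability. Since $G$ is $\ESE$ if and only if each of its connected components is $\ESE$, we may assume $G$ is connected. By Corollary \ref{cor: FC ESE are 2conn}, Theorem \ref{Thm: all 2 connected graphs}, and Theorem \ref{thm: ESE with cut vertex is bip}, every connected $\ESE$-graph is then either 2-connected factor-critical or bipartite (the excluded class $K_{2t}$ is not $\ESE$). Accordingly, we first test bipartiteness in linear time and compute a maximum matching of $G$ in $O(\min(n^\omega,\sqrt{n}\,m))$ time, from which the Gallai--Edmonds decomposition and hence factor-criticality are obtained. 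Any connected graph that is neither bipartite nor 2-connected factor-critical is rejected immediately.

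In the factor-critical branch, Theorem \ref{thm: chracterization of edge-stable} together with Remark \ref{rem: small FC ESE} gives an explicit list of candidates: the odd clique $K_{2r+1}$, the family $\G_1$ of graphs of the form $K_{2r+1}\setminus M$ for some nonempty matching $M$, the family $\G_2$ (a nontrivial independent set $S$ of size at least $3$ complete to $V(G)\sm S$ with $\nu(G\sm S)=1$), or finitely many small graphs. Recognizing membership in these families reduces to counting edges, verifying that $\overline{G}$ is a matching, or searching for a large dominant independent set with the matching-number side condition, all of which run in time $O(n^2)$ or $O(nm)$, well below the claimed bound.

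The dominant cost comes from the bipartite branch, where we use Proposition \ref{prop: charc of ESE bip graph}: with bipartition $U\cup W$ and $|U|<|W|$ (otherwise $G$ has a perfect matching and is excluded by Corollary \ref{rem:unchanged matching number}$(i)$), $G$ is $\ESE$ if and only if for every $u\in U$, no matching of $G$ saturates $N(u)$. For each $u\in U$ this reduces to a single bipartite matching computation: does the subgraph with parts $N(u)$ and $U$ (restricted to the edges of $G$ between them) admit a matching of size $|N(u)|$? Using Hopcroft--Karp each test costs $O(\sqrt{n}\,m)$, for a total of $O(n^{1.5}m)$; using the Mucha--Sankowski algebraic matching algorithm each test costs $O(n^\omega)$ with $\omega<2.376$, for a total of $O(n^{1+\omega})=O(n^{3.376})$. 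Taking the minimum of the two yields the claimed bound. Note that no separate equimatchability check is required, since the square-strong condition already implies it via Proposition \ref{prop: charc of ESE bip graph}(iii) and Lemma \ref{Lem: chr of equim bip graph}. The main obstacle is more conceptual than algorithmic: one must verify that the per-vertex local matching test in the bipartite branch faithfully captures global edge-stability, which is exactly what Proposition \ref{prop: charc of ESE bip graph} guarantees.
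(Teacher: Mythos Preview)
Your proposal is correct and follows essentially the same approach as the paper: split into the factor-critical and bipartite branches, dispatch the factor-critical case via the explicit structural characterization (Theorem~\ref{thm: chracterization of edge-stable} and Remark~\ref{rem: small FC ESE}) in well under the target time, and in the bipartite case perform one maximum-matching computation per vertex of $U$ to test whether $N(u)$ can be saturated (equivalently, Corollary~\ref{cor: not ESE bip graph}), giving $O(n)$ bipartite matchings at $O(n^{\omega})$ or $O(\sqrt{n}\,m)$ each. The only cosmetic differences are that the paper phrases the bipartite test via $G[N(u)\cup N(N(u))]$ rather than the pair $(N(u),U)$ and argues the factor-critical check runs in linear time rather than your looser $O(n^2)$/$O(nm)$, neither of which affects the final bound.
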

\begin{proof}
Let us first note that  Theorems \ref{Thm: all 2 connected graphs} and \ref{thm: ESE with cut vertex is bip} imply that a connected $\ESE$-graph is either (2-connected) factor-critical or bipartite. \\
Remark \ref{rem: small FC ESE} exhibits all factor-critical $\ESE$-graphs with at most 5 vertices; it is clear that one can recognize if the given graph is isomorphic to one of them in constant time. Since factor-critical graphs have an odd number of vertices, there is no factor-critical $\ESE$-graph on 6 vertices. Besides, one can also check whether a given graph with at least 7 vertices is factor-critical $\ESE$ in linear time ($O(n+m)$) using the characterization given in Theorem \ref{thm: chracterization of edge-stable}. Indeed, to decide whether $G$ is isomorphic to $K_{2r+1}\setminus M$ for some matching $M$, it is enough to check if the minimum degree is at least $2r-1$. To decide whether $G$ admits an independent set $S$ of size at least 3 which is complete to $G\sm S$ and $\nu(G\sm S)=1$, one can simply search for a connected component of the complement of $G$ which is a clique (in linear time); if yes it is the unique candidate for the set $S$. Then to check whether $\nu(G\sm S)=1$, it is enough to notice that a graph has matching number 1 if and only if it is the disjoint union of a triangle and isolated vertices, or the disjoint union of a star and isolated vertices. To recognize these graphs, one can check whether the degree sequence of $G\sm S$ is one of $k,1, \ldots,1,0,\ldots ,0$ (where $k\geq 1$ and there are $k$ times 1) or $2,2,2,0,\ldots ,0$ where there is possibly no vertex of degree 0 at all. Clearly, these can be done in linear time.\\
Now, in order to decide whether a bipartite graph $G$ is $\ESE$, we use Corollary \ref{cor: not ESE bip graph}. For every vertex $u\in U$ where $U$ is the small part of the bipartition, compute a maximum matching of the bipartite graph $G[N(u)\cup N(N(u))]$; if $\nu(G[N(u)\cup N(N(u))])=|N(u)|$ for some $u\in U$, then it means that $G$ is not $\ESE$; otherwise it is $\ESE$. This check requires at most $n$ computations of a maximum matching in a bipartite graph, which can be done in time $O(n^{2.376})$  \cite{bip-matching-matrix} or in time $O(\sqrt{n}m)$ which runs in time $O(n^{2.5})$ in case of dense graphs but becomes near-linear for random graphs \cite{hopcroft-karp}. As this term dominates, it follows that the overall complexity of this recognition algorithm is $O(\min(n^{3.376}, n^{1.5}m))$.
\end{proof}

%%%%%%%%%%%%%%%%%%%%%%%%%%%%%%%%%%%%%%%%%%%%%%%%%%%%%%%%%%%%%%%%%%%%%%%%%%%%%%%%%%%%%%%%%%%%%%%%%%%%%%%%%%%%%%%%%%%%%%%%%%%%%%%%%%%%%%%%%%
\section{Concluding Remarks}\label{sec: Conc remarks}

In this paper, we considered $\ESE$-graphs which are equimatchable such that the removal of any edge does not harm their equimatchability. We characterized $\ESE$-graphs under two exclusive categories: factor-critical and bipartite.

As shown in Proposition \ref{prop:unchanged matching number}, for an $\ESE$-graph $G$, the removal of an edge $e$ whose endpoints do not form a connected component of $G$ does not change the matching number of  $G$. This implies that for such an edge $e=uv$ and any maximal matching $M$, there is always a vertex $w$ exposed by $M$ which is adjacent to $u$ or $v$. 
In the line graph of $G$, such an edge corresponds to a vertex $x$  such that for every independent set $I$ of $G \sm (N(x)\cup \{x\})$, there exists some $y\in N(x)$ such that $I\cup \{y\}$ is independent. A vertex satisfying this property is known as a \emph{shedding vertex}. Shedding
vertices are strongly related to the combinatorial topology of independence complexes of graphs \cite{LM2016,wood2009}. In particular, they play a key role in identifying vertex decomposable graphs \cite{BC2014}. $\Shed(G)$ is defined as the set of shedding vertices of a graph $G$. 
In \cite{LM2016}, Levit and Mandrescu showed that a well-covered graph $G$ without isolated vertices has $\Shed(G)=V(G)$ if and only if $G$ is 1-well-covered. We observe that, every vertex of the line graph of an $\ESE$-graph with no component isomorphic to a $K_2$ is a shedding vertex.  It follows that, in the present paper, we characterized well-covered line graphs such that $\Shed(G)=V(G)$.  

After noticing that some well-covered graphs, such as $C_4$ and $C_7$, has no shedding vertex, finding all well-covered graphs having no shedding vertices was mentioned as an open problem in \cite{LM2016}. In terms of equimatchable graphs, this suggests to study the notion of criticality which is the opposite of stability.  Along this line, we introduce the notion of equimatchable  graphs such that the removal of any edge harms the equimatchability. More formally, for an equimatchable graph $G$, we say that $e \in E(G)$ is a \emph{critical-edge} if $G \sm e$ is not equimatchable. Note that if an equimatchable graph $G$ is not edge-stable, then it has a critical-edge. A graph $G$ is called \emph{edge-critical equimatchable}, denoted \emph{$\ECE$} for short, if $G$ is equimatchable and every $e \in E(G)$ is critical. We note that $\ECE$-graphs can be obtained from any equimatchable graph by recursively removing non-critical edges. We also remark that if $e$ is a non-critical edge whose endpoints do not form a connected component of $G$, then $\nu(G)=\nu(G-e)$ as already shown in Proposition \ref{prop:unchanged matching number}. 
By definition of $\ECE$-graphs, a graph $G$ with no $K_2$ component is $\ECE$ if and only if $L(G)$ is well-covered and $\Shed(L(G))=\emptyset$ where $L(G)$ is the line graph of $G$. Thus, the complete characterization of $\ECE$-graphs would enlighten the structure and the recognition of well-covered line graphs with no shedding vertex. We note that this question was explicitly mentioned as an open problem (for general well-covered graphs) in  \cite{LM2016}.

Our preliminary results show that $\ECE$-graphs are either 2-connected factor-critical or 2-connected bipartite or $K_{2t}$ for some $t\geq 2$. Moreover, 2-connected bipartite $\ECE$-graphs can be characterized using similar arguments as in the characterization of bipartite $\ESE$-graphs. On the other hand, 
 we know that smallest factor-critical $\ECE$-graphs have 7 vertices and there are exactly 4 such factor-critical $\ECE$-graphs, including the $C_7$. 
In order to complete the characterization of $\ECE$-graphs, and thus, answering the open problem in \cite{LM2016} for well-covered line graphs, it remains to complete the case of factor-critical $\ECE$-graphs, which we pose as an open question.

\begin{prob} \label{prob:ece}
Characterize/find all factor-critical $\ECE$-graphs.
\end{prob}

One can also extend the notion of stability and criticality of equimatchable graphs to vertex removals (note that in this case the link with 1-well-covered graphs is lost). An equimatchable graph $G$ is called \emph{vertex-stable} if $G-v$ is equimatchable for every $v\in V(G)$. In a way similar to $\ECE$-graphs, one can introduce \emph{vertex-critical equimatchable} graphs as equimatchable graphs which loose their equimatchability by the removal of any vertex. Our preliminary studies show that the description of vertex-stable equimatchable graphs and vertex-critical equimatchable graphs are much simpler compared to their edge counterparts, namely $\ESE$-graphs and $\ECE$-graphs respectively.

Last but not least, one can investigate some extensions of our work in terms of forbidden subgraphs or graph classes more broadly. It is well-known that line graphs are characterized by 9 forbidden subgraphs. Consequently, one may consider the effect of allowing some of these 9 forbidden subgraphs in terms of the structure and the recognition of 1-well-covered graphs. Along the same line, it could be interesting to consider the intersection of 1-well-covered graphs with some known minimal superclasses of line graphs such as quasi-line graphs and EPT-graphs (even though they are not defined by forbidden subgraphs).

\begin{prob} \label{prob:extension}
Find a characterization and/or an efficient recognition of 1-well-covered graphs belonging to the class of quasi-line graphs, or EPT-graphs, or graphs obtained by forbidding a proper subset of 9 forbidden subgraphs for line graphs. 
\end{prob}

\section*{Acknowledgments}
The support of 213M620 Turkish-Slovenian TUBITAK-ARSS Joint Research Project is greatly acknowledged. The authors are also grateful to the anonymous referees for their helpful suggestions to improve the paper.
%%-----------------------------
\nocite{*}
\bibliographystyle{abbrv}

\begin{thebibliography}{10}

\bibitem{ClawfreeEqm}
S.~Akbari, H.~Alizadeh, T.~Ekim, D.~G{\"{o}}z{\"{u}}pek, and M.~Shalom.
\newblock Equimatchable claw-free graphs.
\newblock arXiv:1607.00476v2 [cs.DM].

\bibitem{eqm_regular}
S.~Akbari, A.~H. Ghodrati, M.~A. Hosseinzadeh, and A.~Iranmanesh.
\newblock Equimatchable regular graphs.
\newblock {\em Journal of Graph Theory}, 87(1):35--45, 2018.

\bibitem{BC2014}
T.~B{\i}y{\i}ko{\u{g}}lu and Y.~Civan.
\newblock Vertex-decomposable graphs, codismantlability, cohen-macaulayness,
  and castelnuovo-mumford regularity.
\newblock {\em The Electronic Journal of Combinatorics}, 21(1):{\#}P1.1, 2014.

\bibitem{chvatal}
V.~Chv\'atal and P.~Slater.
\newblock A note on well-covered graphs.
\newblock {\em Quo Vadis, Graph Theory?, Ann. Discrete Math.}, 55:179--182,
  1993.

\bibitem{demange-ekim-equi}
M.~Demange and T.~Ekim.
\newblock Efficient recognition of equimatchable graphs.
\newblock {\em Information Processing Letters}, 114:66--71, 2014.

\bibitem{DET}
M.~Demange, T.~Ekim, and C.~Tanasescu.
\newblock Hardness and approximation of minimum maximal matchings.
\newblock {\em International Journal of Computer Mathematics},
  91(8):1635--1654, 2014.

\bibitem{oddcycles-eqm}
C.~Dibek, T.~Ekim, D.~G{\"{o}}z{\"{u}}pek, and M.~Shalom.
\newblock Equimatchable graphs are ${ \large C}_{2k+1}$-free for $k\geq 4$.
\newblock {\em Discrete Mathematics}, 339(12):2964--2969, 2016.

\bibitem{EK2}
E.~Eiben and M.~Kotrbcik.
\newblock Equimatchable graphs on surfaces.
\newblock {\em J. Graph Theory}, 81:35--49, 2016.

\bibitem{Favaron}
O.~Favaron.
\newblock Equimatchable factor-critical graphs.
\newblock {\em J. Graph Theory}, 10:439--448, 1986.

\bibitem{hartnell2006}
B.~L. Hartnell.
\newblock A characterization of the 1-well-covered graphs with no 4-cycles.
\newblock In T.~editor, editor, {\em Graph Theory Trends in Mathematics}, pages
  219--224. Birkh{\"a}user-Verlag, Basel/Switzerland, 2006.

\bibitem{HT2016}
D.~Hoang and T.~Trung.
\newblock A characterization of triangle-dominating graphs in {W}2 and
  applications.
\newblock arXiv:1606.02815v2 [math.CO].

\bibitem{hopcroft-karp}
J.~E. Hopcroft and R.~Karp.
\newblock An $n^{5/2}$ algorithm for maximum matchings in bipartite graphs.
\newblock {\em SIAM Journal on Computing}, 2(4):225--231, 1973.

\bibitem{k-eqm}
Y.~Kartynnik and A.~Ryzhikov.
\newblock On minimum maximal distance-$k$ matchings.
\newblock {\em Discrete Mathematics and Theoretical Computer Science},
  20({\#}3), 2018.

\bibitem{genus}
K.-I. Kawarabayashi and M.~Plummer.
\newblock Bounding the size of equimatchable graphs of fixed genus.
\newblock {\em Graphs and Combinatorics}, 1:91--99, 2009.

\bibitem{Plummer}
M.~Lesk, M.~D. Plummer, and W.~R. Pulleyblank.
\newblock Equi-matchable graphs.
\newblock {\em Proc. Cambridge Combinatorial Conference in Honour of Paul
  Erdos, B. Bollobas, (editor), Academic Press, London}, pages 239--254, 1984.

\bibitem{LM2016}
V.~Levit and E.~Mandrescu.
\newblock 1-well-covered graphs revisited.
\newblock {\em European Journal of Combinatorics}, 2018.
\newblock In press, doi:10.1016/j.ejc.2018.02.021.

\bibitem{LP}
L.~Lovasz and M.~Plummer.
\newblock Matching theory.
\newblock {\em North-Holland, Amsterdam, Annals of Discrete Mathematics
  edition}, 29, 1986.

\bibitem{bip-matching-matrix}
M.~Mucha and P.~Sankowski.
\newblock Maximum matchings via gaussian elimination.
\newblock In {\em Proc. 45th IEEE Symp. Foundations of Computer Science}, pages
  248--255, 2004.

\bibitem{pinter92}
M.~Pinter.
\newblock Planar regular one-well-covered graphs.
\newblock {\em Cong. Numer.}, 91:159–187, 1992.

\bibitem{pinter95}
M.~Pinter.
\newblock A class of planar well-covered graphs with girth four.
\newblock {\em J. Graph Theory}, 19:69–81, 1995.

\bibitem{pinter97}
M.~Pinter.
\newblock A class of well-covered graphs with girth four.
\newblock {\em Ars Combin.}, 45:241--255, 1997.

\bibitem{survey}
M.~D. Plummer.
\newblock Well-covered graphs: a survey.
\newblock {\em Quaestiones Mathematicae}, 16(3):253--287, 1993.

\bibitem{SS}
R.~Sankaranarayana and L.~Stewart.
\newblock Complexity results for well-covered graphs.
\newblock {\em Networks}, 22:247--262, 1992.

\bibitem{StaplesThesis}
J.~Staples.
\newblock {\em On some subclasses of well-covered graphs}.
\newblock PhD thesis, Vanderbilt University, 1975.

\bibitem{Staples79}
J.~Staples.
\newblock On some subclasses of well-covered graphs.
\newblock {\em J. Graph Theory}, 3:197--204, 1979.

\bibitem{Sumner}
D.~P. Sumner.
\newblock Randomly matchable graphs.
\newblock {\em J . Graph Theory}, 3:183--186, 1979.

\bibitem{wood2009}
R.~Woodroofe.
\newblock Vertex decomposable graphs and obstructions to shellability.
\newblock {\em Proceedings of the American Mathematical Society},
  137(10):3235--3246, 2009.

\bibitem{yannakakis}
M.~Yannakakis and F.~Gavril.
\newblock Edge dominating sets in graphs.
\newblock {\em SIAM Journal of Applied Mathematics}, 38(3):364--372, 1980.

\end{thebibliography}

%%-----------------------------
\end{document}